\renewcommand{\baselinestretch}{1.2}
\newtheorem{theorem}{Theorem}[section]
\newtheorem{claim}[theorem]{Claim}
\newtheorem{corollary}[theorem]{Corollary}
\newtheorem{observation}[theorem]{Observation}
\newcommand{\Natural}[0]{\mathbf{N}}
\def\cO{{\cal O}}
\def\cA{{\cal A}}
\def\cP{{\cal P}}
\def\Prob{\mbox{\rm\tt Pr}}  
\newenvironment{smallitemize} {
  \begin{list}{$\bullet$} {\setlength{\parsep}{0pt}
\setlength{\itemsep}{0pt}} } { \end{list} }
\begin{document}
\title{\bf Memory Lower Bounds for Randomized Collaborative Search and Applications to Biology}
\author{Ofer Feinerman\thanks{
The Louis and Ida Rich Career Development Chair, The Weizmann Institute of Science, Rehovot, Israel.
E-mail: {\tt feinermanofer@gmail.com}.
Supported by the Israel Science Foundation (grant 1694/10).}\\Weizmann Institute of Science \and
Amos Korman\thanks{
CNRS and University Paris Diderot, Paris, France.
E-mail: {\tt \,amos.korman@liafa.jussieu.fr}.
Supported in part by the ANR projects DISPLEXITY and PROSE, and by the INRIA project GANG.}\\ CNRS and Univ. Paris Diderot}
\date{}
\maketitle

\begin{abstract}
Initial knowledge regarding group size can be crucial for collective performance.
We study this relation in the context of the {\em Ants Nearby Treasure Search (ANTS)} problem \cite{FKLS}, which models natural cooperative foraging behavior such as that performed by ants around their nest.
In this problem,  $k$ (probabilistic) agents, initially placed at some central location,
collectively search for a treasure on the two-dimensional grid. The treasure is placed at a target location
by an adversary and the goal is to find it as fast as possible as a function of both $k$ and $D$, where $D$ is the (unknown) distance
between the central location and the target. It is easy to see that $T=\Omega(D+D^2/k)$ time units are necessary for finding the treasure. Recently, it has been established
that $O(T)$ time is sufficient if the agents know their total number $k$ (or a constant approximation of it), and enough memory bits are available at their disposal \cite{FKLS}.
In this paper, we establish lower
bounds on the agent memory size required for achieving
certain running time performances. To the best our knowledge, these bounds are the first non-trivial lower bounds for the memory size of probabilistic searchers. For example,  for every given positive constant $\epsilon$, terminating the search by time
$O(\log^{1-\epsilon} k \cdot T)$ requires agents to use  $\Omega(\log\log k)$ memory  bits.
Such distributed computing bounds may provide a novel, strong tool for the investigation of complex biological systems.


\end{abstract}

\section{Introduction}\label{sec:introduction}

\noindent{\bf Background and Motivation:}
Individuals in biological groups assemble in groups that allow them, among other things, to monitor and react to relatively large environments.
For this, individuals typically disperse over length scales that are much larger than those required for communication. Thus, collecting knowledge regarding  larger areas  dictates a dispersion that may come    at the price of group coordination and efficient information sharing. A possible solution involves the use of designated, localized areas where individuals convene to share information and from which they then disperse to interact with the environment. Indeed, there are numerous examples for such convention areas in the biological world. Cells of the immune system undergo a collective activation, differentiation and maturation process away from the site of infection and within compact lymph nodes \cite{JTWS}. Birds are known to travel long distances from their feeding grounds to communal sleeping area where they were shown to share information regarding food availability \cite{Zahavi71,FDP}.  Here we focus, on a third example, that of collective {\em central place foraging} \cite{OP79,HM85} where a group of animals leave a central location (e.g., a nest) to which they then retrieve collected food items.  Here as well, the localized nest area enables efficient communication. Ants, for example, were shown to share information within their nest regarding food availability and quality outside it \cite{LBF,Gordon02}. This information is then used as a means of regulating foraging efforts.

One piece of information that may be available to a localized group is its {\em size}. Group size may be used to reach collective decisions which then affect the subsequent behavioral repertoire of the individuals. The most prevalent example is that of {\em quorum sensing};  a binary estimate of group size or density. Such threshold measurements are exhibited by a multitude of biological systems such as bacteria \cite{SMB}, amoeba \cite{TGK}, T-cells of the immune system \cite{BMA,FJTC} and social insects \cite{Pratt05}. The quorum sensing process constitutes a first decision step that may lead to cell differentiation and divergent courses of action. Going beyond quorum sensing: there are evidences for higher resolution estimates of group size in, for example, wild dogs where multiple hunting tactics are employed in correlation with increasing numbers of participating individuals \cite{T92}.

Here, we focus on the potential benefits of estimating group size in the context of collective central place foraging.
Ants, for example, engage in this behavior in a cooperative manner -  individuals search for food items around the nest and share any findings. Clearly, due to competition  and other time constrains, food items must be found relatively fast. Furthermore, finding food not only fast but also in proximity to the central location holds numerous advantages at both the search and the retrieval stages.  Such advantages include, for example, decreasing predation risk \cite{K80}, and increasing the rate of food collection once a large quantity of food is found \cite{OP79,HM85}. Intuitively, the problem at hand is distributing searchers within bounded areas around the nest while minimizing overlaps.
 Ants may possibly exchange information inside the nest, however, once they are out, minimizing search overlaps decreases the rate of communication via pairwise interaction, in some species, to a seemingly negligible degree  \cite{HM85}.

It was previously shown that the efficiency of collective central place foraging may be
enhanced by initial knowledge regarding group size \cite{FKLS}.
More specifically, that paper introduces the {\em Ants Nearby Treasure Search (ANTS)}  problem, which models  the aforementioned central place foraging setting.
In this problem,  $k$ (probabilistic) agents, initially placed at some central location,
collectively search for a treasure in the two-dimensional grid. The treasure is placed at a target location
by an adversary and the goal is to find it as fast as possible as a function of both $k$ and $D$, where $D$ is the (unknown) distance
between the central location and the target. Once the agents initiate the search they cannot communicate between themselves.  Based on volume considerations, it is an easy observation that the expected running time of any algorithm is $\Omega(D+D^2/k)$. It was established in \cite{FKLS} that the knowledge of a constant approximation of $k$ allows the agents to find the treasure
in asymptotically optimal expected time, namely, $O(D+D^2/k)$. On the other hand,  the lack of any information  of $k$ prevents them from reaching expected time that is higher than optimal by a factor slightly larger than $O(\log k)$. That work also establishes lower bounds on the competitiveness of the algorithm in the particular case where some given approximation to $k$ is available to all nodes.



In this work, we simulate the initial step of information sharing (e.g., regarding group size) within the nest  by using the abstract framework of {\em advice} (see, e.g., \cite{CFIKP08, FIP06,FKL10}).
 That is, we model the preliminary process for gaining knowledge about  $k$ (e.g., at the central location)  by means of an  {\em oracle} that assigns advice  to agents. To measure the amount of information accessible to agents, we analyze the {\em advice size}, that is, the maximum number of bits used in an advice. Since we are mainly interested in lower bounds on the advice size required to achieve a given competitive ratio, we apply a liberal
  approach and assume a highly powerful oracle. More specifically, even though it is supposed to model a distributed (probabilistic) process, we assume that the oracle is a centralized probabilistic algorithm (almost unlimited in its computational power) that can assign each agent with a different advice.  Note that, in particular, by considering identifiers as part of the advice, our model allows to relax the assumption that all agents are identical and to allow  agents to be of  several types. Indeed, in the context of ants, it has been established that ants on their first foraging bouts execute different  protocols than those that are more experienced~\cite{WMZ04}.



The main technical results of this paper deal with lower bounds on the advice size.
For example, with the terminology of advice, Feinerman et al. \cite{FKLS} showed that advice of size $O(\log\log k)$ bits is sufficient to obtain an $O(1)$-competitive algorithm.
We prove that this bound is tight. In fact, we show a much stronger result, that is, that advice of size $\Omega(\log\log k)$ is necessary
even for achieving competitiveness which is as large as $O(\log^{1-\epsilon} k)$, for every given positive constant $\epsilon$.
 On the other extremity, we show that $\Omega(\log\log\log k)$ bits of advice are necessary for
being $O(\log k)$-competitive, and that this bound is tight. In addition, we exhibit  lower bounds on the corresponding advice size for a range of  intermediate competitivenesses.

Observe that  the advice size bounds from below the number of memory bits used by an agent, as this amount
of bits in required merely  for storing some initial information. In general, from a purely theoretical point of view, analyzing the memory required for efficient search is a central theme in computer science \cite{Rein08,Ro08}, and is typically considered to be difficult. To the best of our knowledge, the current paper is the first paper establishing non-trivial lower bounds for the memory of randomized searching agents with respect to given time constrains.

From a high level perspective, we hope to illustrate that  distributed computing  can potentially provide a novel and efficient methodology for the study of highly complex, cooperative biological ensembles. Indeed, if experiments that 
comply with our setting reveal that the ants' search is time efficient, in the sense detailed above, then our theoretical results can provide some insight  on the memory ants use for this task. A detailed discussion of this approach is given in Section~\ref{sec:conclusion}.

\paragraph{Our results:}
The main technical results deal with lower bounds on the advice size. Our first result is perhaps the most surprising one. It says not only  that $\Omega(\log\log k)$ bits of advice are required to obtain an $O(1)$-competitive algorithm, but that
roughly this amount is necessary
even for achieving competitiveness which is as large as $O(\log^{1-\epsilon} k)$, for every given positive constant $\epsilon$.
This result should be put in contrast to the fact   that with no advice at all, one can obtain a search algorithm whose competitiveness is slightly higher than logarithmic \cite{FKLS}.


\begin{theorem}\label{lower-loglog}
There is no search algorithm that is $O(\log^{1-\epsilon} k)$-competitive for some fixed positive~$\epsilon$, using advice of size $o(\log\log k)$.
\end{theorem}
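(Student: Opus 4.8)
Suppose for contradiction that some algorithm is $M$-competitive with $M=O(\log^{1-\epsilon}k)$ while using advice of size $s=o(\log\log k)$. The first step is to strip away the oracle. An advice string is one of at most $L:=2^{s}$ values, and an agent's behaviour is determined by its advice alone; hence the algorithm is nothing but a fixed family $A_1,\dots,A_L$ of single-agent randomized strategies, together with, for each population size $k$, a (possibly randomized) rule assigning a number $k_a$ of agents to each $A_a$. For a strategy $A_a$ write $V_a(D,t)$ for the expected number of nodes at distance $D$ that an agent running $A_a$ visits within $t$ steps. Placing the treasure uniformly among the $\Theta(D)$ nodes at distance $D$ and using that agents act independently, a standard union/Jensen computation shows that $M$-competitiveness for this $(k,D)$ forces the summed coverage condition $\sum_{a} k_a\,V_a\!\left(D,\,\Theta(MD^2/k)\right)=\Omega(D)$, where we work in the regime $D\ge k$ (so the optimal time is $\Theta(D^2/k)$). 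Since $\sum_a k_a V_a\le k\max_a V_a$, this requires that \emph{some} strategy satisfies $V_a(D,\Theta(MD^2/k))=\Omega(D/k)$; call such an $A_a$ a \emph{server} for the scale $k$. It therefore suffices to exhibit a population size $k$ served by no $A_a$.

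\textbf{Core lemma (the crux).} I would next prove that one fixed strategy can serve only a bounded multiplicative range of population sizes: if $A_a$ serves both $k$ and $k'$ then $k'/k=2^{O(M)}$; equivalently, each $A_a$ serves at most $O(M)$ of the geometric scales $1,2,4,\dots$. The intended mechanism: to be competitive against a far target at distance $D$ with $k$ identical agents, the agents must, by time $\Theta(MD^2/k)$, cover \emph{every} direction of the distance-$D$ circle with constant probability. As the agents are identical copies of $A_a$, this is possible only if a single agent devotes an $\Omega(1/M)$ share of its probability mass to an exploration ``speed'' matched to $k$ — a radial profile reaching distance $r$ around time $\Theta(Mr^2/k)$, i.e.\ matched to the frontier $r\sim\sqrt{kt/M}$ — since with a smaller share too few agents reach the frontier in time to tile all directions. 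Profiles matched to geometrically separated population sizes are incompatible, and the shares over distinct scales are disjoint contributions to one probability mass, hence sum to at most $1$; so at most $O(M)$ scales can each receive an $\Omega(1/M)$ share.

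\textbf{Conclusion by counting.} Let the adversary range the population over $[1,k]$, which contains $\Theta(\log k)$ geometric scales. By the core lemma each of the $L=2^{s}$ strategies serves $O(M)$ of them, so jointly they serve $O(L\cdot M)$ scales. If $L\cdot M=o(\log k)$ then some scale $k^{\star}$ is served by no strategy, and placing the treasure at a far distance for population $k^{\star}$ violates $M$-competitiveness. Hence $2^{s}=\Omega(\log k / M)$. Substituting $M=O(\log^{1-\epsilon}k)$ yields $2^{s}=\Omega(\log^{\epsilon}k)$, i.e.\ $s=\Omega(\epsilon\log\log k)=\Omega(\log\log k)$, contradicting $s=o(\log\log k)$.

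\textbf{Main obstacle.} The delicate point is the core lemma, and specifically the ``$\Omega(1/M)$ share'' claim. The naive budget — that a length-$t$ trajectory visits at most $t$ nodes, so $\sum_{D}V_a(D,t)\le t$ — is \emph{too weak}: a single agent has ample time to cover any one scale, and this bound alone permits one strategy to serve an unbounded range of scales. The true cost is angular rather than radial: one must show that guaranteeing constant-probability coverage of \emph{all} directions at a far distance, by the competitive deadline, cannot be achieved unless a definite fraction of the agent's probability mass commits to the matched frontier speed, and that such commitments for well-separated population sizes add across scales. Turning ``reaching the right distance at the right time in enough directions'' into a quantity that is $\Omega(1/M)$ per served scale yet sums to $O(1)$ is the technical heart of the argument, and the step I expect to demand the most care.
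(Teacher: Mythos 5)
Your high-level architecture --- strip the oracle down to $L=2^{s}$ single-agent strategies, show each strategy can ``serve'' only $O(M)$ of the $\Theta(\log k)$ geometric population scales, and conclude $2^{s}M=\Omega(\log k)$ --- matches the paper's, and your final counting reproduces the paper's bound $s\geq \epsilon\log\log k-O(1)$. But the proposal has a genuine gap exactly where you flag it: the core lemma is asserted, not proven, and the mechanism you propose for it (an ``$\Omega(1/M)$ share of probability mass'' committed per served scale to a matched frontier speed, with shares for well-separated scales adding disjointly) is not formalized and would be delicate to make rigorous. In particular, the strategy that serves scale $k$ may depend on the target distance $D$, so ``$A_a$ serves $k$'' is a family of conditions indexed by $D$ rather than a single commitment, and a single trajectory (e.g.\ a spiral-like one, or one that runs different radial profiles in different time windows) can contribute to many scales at once; your argument gives no quantity that is provably $\Omega(1/M)$ per scale yet sums to $O(1)$.

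Moreover, your diagnosis of the obstacle is inverted: the ``naive'' budget --- a $t$-step walk visits at most $t$ nodes --- is not too weak; it is precisely what the paper uses, but deployed at a \emph{fixed time horizon} rather than at a fixed target distance. The paper fixes $T$, places the treasure at distance $2T+1$ (so nothing is found by time $2T$), and for the run with $k_i=2^i$ agents considers the ring $R_i=B(d_i)\setminus B(d_{i-1})$ where $d_i=\sqrt{Tk_i/\Phi(k_i)}$. Competitiveness forces every node of $B(d_i)$ to be visited by time $2T$ with probability at least $1/2$, so among the $k_i$ agents the best one covers $\Omega(d_i^{2}/k_i)=\Omega(T/\Phi(k_i))$ nodes of $R_i$ in expectation, and pigeonholing over the at most $2^{\Psi(\sqrt{T})}$ advice values yields, for each $i$, an advice $a_i$ whose owner alone covers $\Omega(T/\Phi(2^{i}))$ nodes of $R_i$. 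The crucial effect of fixing $T$ is that the per-scale demands are all of the same order $T/\Phi$ --- instead of the geometrically decaying $\Omega(D/k)$ you get by fixing $D$, which is why the budget looked useless to you --- and they sit in pairwise disjoint rings, so they add. One further pigeonhole over the $\leq 2^{\Psi(\sqrt{T})}$ advice values produces a single advice whose single agent must cover
$\Omega\bigl(T\sum_{i} 1/(\Phi(2^{i})\,2^{\Psi(\sqrt{T})})\bigr)$
nodes within $2T$ steps, whence $\sum_{i\leq \log k} 1/(\Phi(2^{i})\,2^{\Psi(k)})=O(1)$. With $\Phi(k)=O(\log^{1-\epsilon}k)$ this sum is $\Omega(\log^{\epsilon}k/2^{\Psi(k)})$, forcing $\Psi(k)\geq\epsilon\log\log k-O(1)$. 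So the missing ingredient is not an angular argument but this change of viewpoint (fix the time horizon, index the forced regions by the scale); with it your core lemma becomes a short counting argument, and without it the proposal remains incomplete.
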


On the other extremity,  we show that 
 $\Omega(\log\log\log k)$ bits of advice are necessary for constructing
an $O(\log k)$-competitive algorithm, and we prove that this bound on the advice is in fact tight.

\begin{theorem}\label{thm:logloglog-lower}
There is no $O(\log k)$-competitive search algorithm, using advice of size $\log\log\log k-\omega(1)$. On the other hand, there exists an $O(\log k)$-competitive search algorithm using advice of size $\log\log\log k+O(1)$.
\end{theorem}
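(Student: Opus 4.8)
The plan is to base both directions on a single principle relating competitiveness to the resolution at which the agents must know $k$. To be $\alpha$-competitive it essentially suffices that each agent run the asymptotically optimal search of \cite{FKLS} tuned to a guess $\hat k$ that is a mild \emph{underestimate} of $k$: underestimating by a factor $\rho$ slows the sweep of a radius-$D$ disc by a factor $\rho$ (each agent covers a $1/\hat k$ rather than a $1/k$ share, so its phase at distance $D$ costs $O(D+D^2/\hat k)$), whereas overestimating risks leaving the disc uncovered. Thus for $\alpha=O(\log k)$ the useful guesses form the window $\hat k\in[k/\log k,\,k]$. The feature particular to this regime is that it is enough to locate $L:=\log k$ up to a constant \emph{factor}: an interval $[b,2b)$ for $L$ can be handled by hedging, and only $O(\log\log k)$ dyadic intervals tile the range of $L$, which is exactly $\log\log\log k+O(1)$ bits.

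\paragraph{Upper bound.}
For the positive direction I would let the oracle provide $j=\lfloor\log\log k\rfloor$, so that the agent learns $L\in[2^j,2^{j+1})=:[b,2b)$; encoding $j$ costs $\log\log\log k+O(1)$ bits. Each agent then samples an exponent $\ell$ uniformly in $\{b-\Delta,\dots,2b\}$ with slack $\Delta=\Theta(\log\log k)$, and runs the optimal search as if there were $2^\ell$ agents. The analysis tracks the bucket of agents whose guess is $\ell^\ast=\lfloor L-\log\log k\rfloor$, so that $2^{\ell^\ast}=\Theta(k/\log k)$: a $\Theta(1/b)=\Theta(1/\log k)$ fraction of the $k$ agents land there, i.e.\ $\Theta(k/\log k)$ agents, each of which completes its sweep of the radius-$D$ disc in time $O(D+D^2/2^{\ell^\ast})=O(\log k\,(D+D^2/k))$ and covers a $2^{-\ell^\ast}=\Theta(\log k/k)$ fraction of it. Together they cover the whole disc within the target time, so the treasure is found in $O(\log k)$-competitive expected time using $\log\log\log k+O(1)$ bits.

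\paragraph{Lower bound.}
The lower bound is the crux. Suppose $\cA$ is $O(\log k)$-competitive using advice of size $s=\log\log\log k-\omega(1)$, so the number of distinct advice strings is $N=2^s=o(\log\log k)$. I would summarize each string $a$ by a fixed \emph{coverage profile} recording, for each scale $D$ and time budget $t$, the expected number of distance-$\Theta(D)$ nodes an agent running the (randomized) protocol $P_a$ visits; since $P_a$ does not depend on $k$, the oracle realizes an instance for parameter $k$ as a mixture of these $N$ fixed profiles over the $k$ agents. Competitiveness for parameter $k$ at the adversarial distance $D$ forces the mixture to cover a node-set of size $\Theta(D^2)$ within time $O(\log k\,(D+D^2/k))$, which, read on the exponent $L=\log k$, means that some string must place enough effective mass at guess $\approx L-\log L$. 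The governing inequality is a budget law: an agent visits at most $t$ distinct nodes by time $t$ (summed over disjoint annuli), and this plays the role of a unit-mass constraint on each profile. An Abel-summation estimate then turns the per-scale density requirements into a pointwise bound of the form ``mass at exponent $x$ is $\gtrsim 1/x$'', whose sum $\sum_{x=L_1}^{L_2} 1/x=\Theta(\log(L_2/L_1))$ can be $\le 1$ only when $L_2/L_1=O(1)$. Hence a single string can serve only values of $k$ whose exponents $L=\log k$ lie in an interval of bounded ratio. Since $\log k$ sweeps multiplicatively through $\Theta(\log\log k)$ such windows as $k$ runs over $[2,K]$, covering all of them requires $N=\Omega(\log\log k)$ strings, contradicting $N=o(\log\log k)$ and giving $s\ge\log\log\log k-O(1)$.

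\paragraph{Main obstacle.}
I expect the principal difficulty to be making this reduction rigorous for \emph{arbitrary} randomized protocols, rather than the idealized ``sample a guess $\hat k$ and run the canonical search'' form used above: one must derive the per-scale density requirements and the bounded-ratio window directly from the coverage profiles together with the visits-at-most-$t$ conservation law, while simultaneously absorbing the oracle's freedom to blend the $N$ strings arbitrarily across the $k$ agents and the adversary's freedom to choose the distance $D$ after the algorithm is fixed. The bounded-ratio (constant window on $\log k$) computation, obtained by summing the pointwise mass bounds to a harmonic sum, is the technical heart of the whole argument.
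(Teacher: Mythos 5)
Your plan follows the paper's own proof on both ends: the lower bound is exactly the specialization of the paper's general result (Theorem~\ref{main-lower} via Corollary~\ref{cor:logk}) -- coverage requirements at geometrically spaced scales tied to executions with $k_i=2^i$ agents, a per-string unit-speed budget, and the harmonic sum $\sum_i 1/i=\Omega(\log\log k)$ -- and the upper bound is the paper's Theorem~\ref{th:lowerbound}: advice $\lfloor\log\log k\rfloor$, each agent guessing uniformly among the $\Theta(\log k)$ consistent values, and the analysis tracking the $\Theta(k/\log k)$ agents whose guess is effectively $k/\log k$. However, two of your steps have genuine gaps as written.

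In the lower bound, the final counting step is broken. You argue that a single advice string can only ``serve'' exponents lying in a bounded-ratio interval, and then count $\Theta(\log\log k)$ dyadic windows. But the set of exponents served by one string need not be an interval: a set such as $\{2^j : j\ge 1\}$ has harmonic sum $O(1)$ while meeting every dyadic window, so your unit-budget constraint does not confine a string to $O(1)$ windows, and the inference ``$\Theta(\log\log k)$ windows $\Rightarrow$ $\Omega(\log\log k)$ strings'' does not follow. The correct aggregation -- the one the paper uses -- needs no interval structure at all: one fixes a single deadline $2T$ (treasure at distance $2T+1$), shows that for \emph{every} integer scale $x\in[x_0,\frac12\log T]$ some string must carry expected coverage $\gtrsim T/x$ on the ring $R_x=B(d_x)\setminus B(d_{x-1})$ with $d_x=\sqrt{Tk_x/\Phi(k_x)}$, observes that these rings are pairwise disjoint and all requirements share the same time budget, and then pigeonholes the total required mass $\Omega(T\log\log T)$ over the $2^{\Psi}$ strings, each of which can cover at most $O(T)$ nodes by time $2T$. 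Your ``mass $\gtrsim 1/x$'' and unit-budget ingredients already suffice for this; the bounded-ratio window picture should simply be dropped. Note also that the disjointness of the annuli and the compatibility of all the per-$k$ deadlines is precisely the content of the paper's choice of $d_x$, not something that comes for free from the profiles.

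In the upper bound, your time bound is conditional on the good event that $\Theta(k/\log k)$ agents choose the guess $\ell^*$, and the proposal is silent about the complementary event. Its probability is $e^{-\Omega(k/\log k)}$ by Chernoff, but multiplying by the conditional expected time is not harmless: guesses can be as large as $k^2$, and a population stuck with such an overestimate needs a number of phases exponential in a power of $k$ before its per-phase success probabilities accumulate, so the naive product does not visibly vanish (it can be made to, but only via an extra pigeonhole over the realized guesses). The paper sidesteps this cleanly by changing the algorithm itself: each phase begins with a spiral search \emph{around the source} of length $t_i+2^i$, which guarantees that the treasure is found within $O(D^2)$ time with probability $1$, so the bad event contributes only $O(D^2 e^{-\sqrt{k}})$. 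This is exactly the structural difference from the algorithm of \cite{FKLS} that the paper points out, and it is what your black-box invocation of the optimal search misses.
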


Finally, we also exhibit  lower bounds for the corresponding advice size for a range of  intermediate competitivenesses.

\begin{theorem}
Consider  a $\Phi(k)$-competitive search algorithm using advice of size $\Psi(k)$. 
Then, $\Phi(k)=\Omega(\log k / 2^{\Psi(k)})$, or in other words,  ${\Psi(k)}=\log\log k - \log \Phi(k) - O(1)$. In particular, if $\Phi(k)=\frac{\log k}{2^{\log^\epsilon\log k}}$, then $\Psi(k)= \log^{\epsilon}\log k-O(1)$.
\end{theorem}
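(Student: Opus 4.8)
The plan is to establish the equivalent inequality $\Phi(k)\cdot 2^{\Psi(k)}=\Omega(\log k)$ directly, by converting $\Phi$-competitiveness into a family of per-annulus coverage demands and then pitting a trivial volume budget against a pigeonhole over the advice alphabet. The structural fact I would lean on throughout is that an agent's behavior depends only on its advice string: there is a single global family of randomized one-agent protocols $\{\mathcal{P}_a\}$ indexed by the string $a\in\{0,1\}^{\le\Psi(k)}$ (hence at most $2^{\Psi(k)+1}$ of them), and for group size $k$ the oracle merely decides how many agents $n_{k,a}$ receive each string. For a protocol $\mathcal{P}_a$, a dyadic annulus $A_D=\{u:D\le\dist(u)<2D\}$, and a time bound $t$, I write $V_a(D,t)$ for the expected number of distinct nodes of $A_D$ that one agent running $\mathcal{P}_a$ visits within $t$ steps.

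First I would turn competitiveness into a coverage bound. Placing the treasure uniformly at random in $A_D$, the expected search time is $O(\Phi(D+D^2/k))$, so by Markov's inequality the treasure is found by time $T:=C\Phi D^2/k$ (in the regime $D\ge k$, with $C$ a large constant) with probability at least $1/2$; a union bound over the $k$ agents then gives $\sum_a n_{k,a}V_a(D,T)\ge |A_D|/2=\Theta(D^2)$. Since $\sum_a n_{k,a}=k$, some advice string $a$ satisfies $V_a(D,T)\ge\Theta(D^2/k)$. The complementary ingredient is a trivial volume budget: an agent visits at most one new node per step, and the dyadic annuli are disjoint, so $\sum_D V_a(D,t)\le t$ for every fixed protocol $\mathcal{P}_a$ and every $t$.

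The heart of the argument, and the step I expect to be the main obstacle, is to couple a \emph{single} time $T$ across the whole range of group sizes. Fixing $T\ge C\Phi k$ and letting the adversarial group size $k'$ range over $[1,k]$, the distance $D$ for which $A_D$ is ``critical at time $T$'' (that is, $T=C\Phi D^2/k'$ with $D\ge k'$) is $D=\sqrt{Tk'/(C\Phi)}$, which sweeps a multiplicative interval of width $\sqrt{k}$; hence $\Theta(\log k)$ distinct dyadic annuli are critical at that one time $T$. By the coverage bound, each critical $A_D$ admits some advice string $a(D)$ with $V_{a(D)}(D,T)\ge\Theta(D^2/k')=\Theta(T/\Phi)$. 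Now the pigeonhole enters: the map $D\mapsto a(D)$ sends these $\Theta(\log k)$ annuli into an alphabet of size at most $2^{\Psi(k)+1}$, so one protocol $\mathcal{P}_{a^*}$ is charged with $\Omega(\log k/2^{\Psi(k)})$ of them, each demanding $V_{a^*}(D,T)\ge\Theta(T/\Phi)$ at the same $T$. Summing these demands against the single budget $\sum_D V_{a^*}(D,T)\le T$ yields $\Omega(\log k/2^{\Psi(k)})\cdot\Theta(T/\Phi)\le T$, i.e.\ $\Phi=\Omega(\log k/2^{\Psi(k)})$.

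Two points will need care but not ingenuity. The choice $T\ge C\Phi k$ is exactly what guarantees that every critical distance in the swept interval satisfies $D\ge k'$, so that $D^2/k'$ dominates and the coverage estimate is valid; and since the oracle may randomize the assignment, I would run the entire computation with $\mathbb{E}[n_{k',a}]$ in place of $n_{k',a}$, which changes nothing as all inequalities already sit inside expectations. I would also assume, without loss of generality, that $\Phi$ is non-decreasing so that $\Phi(k')\le\Phi(k)$ over the range and the thresholds $\Theta(T/\Phi(k'))$ can all be replaced by $\Theta(T/\Phi(k))$. Finally, the ``in particular'' clause is pure algebra: taking logarithms of $\Phi=\Omega(\log k/2^{\Psi})$ gives $\Psi=\log\log k-\log\Phi-O(1)$, and substituting $\Phi=\log k/2^{\log^\epsilon\log k}$ gives $\Psi=\log^\epsilon\log k-O(1)$.
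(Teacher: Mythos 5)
Your proposal is correct and is essentially the paper's own argument (the proof of Theorem~\ref{main-lower}): there too, one fixes a single time budget $T$, lets the adversarial group size range over powers of two so that the critical radii $d_i=\sqrt{Tk_i/\Phi(k_i)}$ carve out disjoint rings, converts competitiveness plus Markov into a per-ring, per-advice expected-coverage demand of $\Omega(T/\Phi(k_i))$, pigeonholes these demands over the at most $2^{\Psi(k)}$ advice strings, and contradicts the one-node-per-step volume budget. The only difference is packaging: you invoke monotonicity of $\Phi$ up front to make every demand $\Theta(T/\Phi(k))$ and so reach the stated bound directly, whereas the paper keeps the refined sum $\sum_i 1/\Phi(2^i)$ in Theorem~\ref{main-lower} (needed for its other corollaries) and only afterwards specializes, exactly as you do, via $\sum_{i\le \log k} 1/\Phi(2^i)\ge \log k/\Phi(k)$.
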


Our results on the advice complexity are summarized in Table 1.   As mentioned, our lower bounds on the advice size are also lower bounds on the memory size of agents. 



{\renewcommand{\arraystretch}{0.8}
\renewcommand{\tabcolsep}{0.2cm}
\begin{table*}\label{tab:results}
\begin{center}
{
\bgroup\large
\begin{tabular}{|l|l|l|}
\hline
& Competitiveness
& Advice size    \\

\hline
{Tight bound }
 & $O(1)$    & $\Theta(\log\log k)$\\

\hline
{Tight bound }
 & $O(\log^{1-\epsilon}  k)~~~~~~~0<\epsilon < 1$    & $\Theta(\log\log k)$\\



\hline
Lower bound  & $\log k/ 2^{\log^\epsilon\log k}~~~0<\epsilon < 1$  & $ \log^{\epsilon}\log k-O(1)$ \\

\hline
Tight bound  & $O(\log k)$ & $\log \log\log  k+ \Theta(1)$\\
\hline
Upper bound  ~~\cite{FKLS}& $O(\log^{1+\epsilon} k)$ & zero \\

\hline
\end{tabular}
\egroup
}
\end{center}
\caption{Bounds on the advice for given competitiveness}
\end{table*}

\paragraph{Related Work:}
Our current work falls within the framework of natural algorithms, a recent attempt to
study biological phenomena from an algorithmic
perspective~\cite{AAB+11, BMV12, Cha09,FKLS}.

The notion of advice is central in computer science.
 In particular,
the concept of advice  and its impact on various computations
has recently found various applications in distributed computing. In this context, the main measure used is the advice size.
 It is
for instance analyzed in frameworks such as  proof labeling \cite{KK07,KKP10}, broadcast \cite{FIP06}, local computation of MST \cite{FKL10}, graph coloring  \cite{FGLP07}
and  graph searching by a single robot \cite{CFIKP08}. Very recently, it has also been investigated in the context of online algorithms \cite{BKKK11, EFKR11}.

Collective search is a classical problem that has been extensively studied in different contexts (for a more comprehensive summary refer to \cite{FKLS}). Social foraging theory \cite{GC00} and central place foraging typically deal with optimal resource exploitation strategies between competing or cooperating individuals. Actual collective search trajectories of non-communicating agents have been studied in the physics literature (e.g.,  \cite{R06,BH97}). Reynolds \cite{R06} achieves optimal speed up through overlap reduction which is obtained by sending searchers on near -straight disjoint lines to infinity. This must come at the expense of finding proximal treasures. Harkness and Maroudas
 \cite{HM85} combined field experiments with computer simulations of a semi-random collective search and suggest substantial speed ups as  group size increases. The collective search problem has further been studied from an engineering perspective (e.g., \cite{PYP01}). In this case, the communication between agents  (robots) or their computational abilities are typically unrestricted. These works put no emphasis on finding nearby treasures fast. Further, there is typically    (with the exception of \cite{HM85}) no reference to group size or its knowledge by the agents.

In the theory of computer science,
the exploration of graphs using mobile agents (or robots) is a central question. (For a more detailed survey refer to e.g.,~\cite{FKLS,FDPS10}.) Most graph exploration research in concerned  with the case
of a single deterministic  agent exploring a finite graph, see for example \cite{AH00, BFRSV,DP02, DFKP02, GPRZ07, PP99, Rein08}. The more complex situation of multiple identical deterministic agents was studied in~\cite{AB97, FDPS10, FDPS11,FGKP04}. 
In general, one of the main challenges in search problems is the establishment of memory  bounds. 
For example, the question of whether a single agent can explore all finite undirected graphs using logarithmic memory was open for a long time; answering it to the affirmative \cite{Rein08} established an equality between the classes of languages SL and L. As another example, it was proved in \cite{Ro08} that no finite set of constant memory agents  can explore
all graphs. 


The simplest (and most studied) probabilistic search algorithm is  the random
walk.
In particular, several studies  analyzing the speed-up measure for  $k$-random walkers
have recently been published.
In these papers, a speed-up of $\Omega(k)$ is established for various finite graph families,
including, e.g., expenders and random graphs~\cite{AAKKLT, ES11, CFR09}.
In contrast, for the two-dimensional $n$-node grid, as long as $k$ is polynomial in $n$, the speed up is only logarithmic in~$k$.
The situation with infinite grids is even  worse.
Specifically, though the $k$-random walkers find the treasure with probability one, the expected (hitting) time  is infinite.

%


 Evaluating the running time as a function of $D$, the distance to the treasure, was studied in the context of the cow-path problem.
Specifically,  it was established in \cite{BCR91} that the competitive ratio for deterministically finding~a~point on the real line is nine, and that in the two-dimensional grid,  the spiral search algorithm is optimal up to lower order terms. Several other varients where studied in  \cite{DFG2006,KRT96, KSW86, LS01}. In particular,
in \cite{LS01},  the cow-path problem was extended by considering $k$ agents. However, in contrast to our setting, the agents they consider have unique identities, and the goal is achieved by (centrally) specifying a different path for each of the $k$ agents.

%


The question of how important it is for individual processors to know their total number has recently been addressed in the context of locality.  Generally speaking, it has been observed that for several
classical local computation tasks, knowing the number of processors is not essential~\cite{KSV11}. On the other hand, in the context of  local distributed decision, some evidence exist that such knowledge is crucial for non-deterministic verification~\cite{FKP11}.

%




\section{Preliminaries}\label{sec:preliminaries}
{\bf General setting:}
We consider the {\em Ants Nearby Treasure Search (ANTS)}  problem initially introduced in \cite{FKLS}. In this {\em central place} searching problem, $k$ mobile {\em agents}  are searching for a {\em treasure} on the two-dimensional plane.
The agents are probabilistic mobile machines (robots).
They are identical, that is, all agents execute the same protocol $\cP$.
Each agent has some limited field of view, i.e., each agent can see its surrounding up to a distance of  some $\varepsilon>0$. Hence, for simplicity, instead of considering the two-dimensional plane, we assume that the agents  are actually walking on the integer two-dimensional infinite grid $G=\mathbb{Z}^2$ (they can traverse an edge of the grid in both directions). The search is central place, that is, all $k$ agents initiate the search from some central node $s\in G$, called the {\em source}. Before the search is initiated,
an adversary locates the treasure  at some node $t\in G$, referred to as the {\em target} node.
 Once the search is initiated, the agents cannot communicate among themselves.
 We denote by $D$ the (Manhattan)  distance between the source node and the target, i.e., $D=d_G(s,t)$. It is important to note that the agents have no a priori information about the location of $t$ or about $D$.
 We say that the agents {\em find} the treasure when one of the agents visits the target node~$t$.
 The goal of the agents it to find the treasure as fast as possible as a function of both $D$ and $k$.

Since we are mainly interested in lower bounds, we assume a very liberal setting. In particular,
 we do not restrict neither the computational power nor  the navigation capabilities of agents. Moreover, we put no restrictions on the internal storage used for navigation\footnote{On the other hand, we note that for constructing upper bounds, the algorithms we consider use simple procedures that can be implemented using relatively little resources.
 For example, with respect to navigation, the constructions only  assume  the ability to perform four basic procedures, specifically: (1) choose a
direction uniformly at random, (2) walk in a ``straight line'' to a prescribed distance and direction, (3) perform a {\em spiral search}  around a given node (see, e.g., \cite{BCR91}),
and (4) return to the source node.}. \\

\noindent{\bf Oracles and Advice:}
We would like to model  the situation in which before the search actually starts, some  initial communication may be made  between the  agents at the source node.
In reality, this preliminary communication may be quite limited. This may be because of  difficulties in the communication that are inherent to the agents or the environment, e.g., due to faults or limited memory, or because of asynchrony issues regarding the different starting times of the search, or simply because agents are identical and it may be difficult for agents to distinguish one agent from the other. Nevertheless,  we consider a very liberal setting in which this preliminary communication  is almost unrestricted.

 More specifically, we consider a centralized algorithm called {\em oracle} that assigns advices to agents in a preliminary stage. The oracle, denoted by $\cO$, is a probabilistic\footnote{It is not clear whether  or not  a probabilistic oracle is strictly more powerful than a deterministic one.  Indeed, the oracle assigning the advice is unaware of $D$, and may thus potentially use the randomization to reduce the size of the advices by
 balancing between the efficiency of the search for small values of $D$ and larger values.} centralized algorithm that receives as input a set of $k$ agents and assigns an {\em advice} to each of the $k$ agents.
We  assume that the oracle may use a different protocol for each $k$; given $k$, the randomized algorithm used for assigning the advices to the $k$~agents is denoted by $\cO_k$. Furthermore,  the oracle may assign a different advice to each agent\footnote{We note that even though we consider a very liberal setting, and allow a very powerful oracle, the oracles we use for our upper bounds constructions
are very simple and rely on much weaker assumptions. Indeed, these oracles are not only deterministic but also assign the same advice to each of the $k$ agents.}.
Observe, this definition of an oracle allows it to simulate almost any reasonable preliminary communication between the agents\footnote{For example, it can simulate to following very liberal setting. Assume that in the preprocessing stage,   the $k$ agents are organized in a clique topology, and
that each agent can send a separate message to each other agent. Furthermore, even though the agents are identical, in this preprocessing stage, let us assume that agents can distinguish the messages received from different  agents, and that each of the $k$ agents may use a different probabilistic protocol for this preliminary  communication.  In addition, no restriction is made neither on  the memory and computation capabilities of agents nor on the preprocessing  time, that is, the preprocessing stage takes finite, yet unlimited, time.}.


 It is important to stress that even though all agents execute the same searching protocol, they may start the search with different advices. Hence, since their searching protocol may rely on the content of this initial advice,
 agents with different advices may behave differently. Another important remark  concerns the fact that some part of the advices may be used for encoding (not necessarily disjoint)
  identifiers. That is, assumptions regarding the settings in which not all agents are identical and there are several types of agents can be captured by our setting of advice.

To summarize, a {\em search algorithm} is a pair $\langle \cP,\cO\rangle$ consisting of a randomized searching protocol $\cP$ and randomized oracle $\cO=\{\cO_k\}_{k\in \Natural}$. Given
$k$ agents, the randomized oracle $\cO_{k}$ assigns a separate advice to each of the given agents. Subsequently, all agents initiate the actual search by letting each of the  agents  execute protocol $\cP$ and using the corresponding advice as input to $\cP$. Once the search is initiated, the agents cannot communicate among themselves.



Consider  an oracle $\cO$.
Given $k$, let $\Psi_{\cO}(k)$ denote the maximum number of bits devoted for encoding the advice of an agent, taken over all coin tosses of $\cO_k$, and over the $k$ agents. In other words, $\Psi_{\cO}(k)$ is the minimum number of bits necessary for encoding the advice, assuming the number of agents is $k$.  Note that
$\Psi_{\cO}(k)$ also bounds from below the number of memory bits of an agent required by the search algorithm $\langle \cP,\cO\rangle$, assuming that the number of agents is $k$.  The function $\Psi_{\cO}(\cdot)$ is called the {\em advice size} function of oracle  $\cO$.
(When the context is clear, we may omit the subscript $\cO$ from $\Psi_{\cO}(\cdot)$ and simply use $\Psi(\cdot)$ instead.)\\



\noindent{\bf Time complexity:}
When measuring the time to find the treasure, we assume that all internal computations are performed in zero time.
For the simplicity of presentation, we assume that the movements of agents are synchronized, that is, each edge traversal is performed in precisely one unit of time. Indeed, this assumption can easily be removed if we measure the time according to the slowest edge-traversal. We also assume that all agents start the search simultaneously at the same time. This assumption can also be easily removed by starting to count the time when the last agent initiates the search.

The {\em expected running time} of a search algorithm  $\cA:=\langle \cP,\cO\rangle$ is  the expected time until at least one of the agents finds the treasure.
The expectation  is defined with respect to the coin tosses made by the (probabilistic) oracle $\cO$ assigning the advices to the agents, as well as the subsequent coin tosses made by the agents executing $\cP$. We denote the expected running time of an algorithm $\cA$ by $\tau=\tau_{\cA}(D,k)$.
In fact, for our lower bound to hold, it is sufficient to assume that the probability that the treasure is found by time $2\tau$ is at least $1/2$. By Markov inequality, this assumption is indeed weaker than the assumption that the expected running time is $\tau$.

Note that if an agent knows $D$, then it can potentially find the treasure in time $O(D)$, by
walking to a distance $D$ in some direction, and then performing a circle around the source of radius $D$ (assuming, of course, that its navigation abilities enable it  to perform such a circle). On the other hand, with the absence of knowledge about $D$, an agent can find the treasure in time $O(D^2)$ by performing a spiral search  around the source (see, e.g., \cite{BCR91}). The following observation imply that $\Omega(D+D^2/k)$ is a lower bound on the expected running time of any search algorithm. The proof is straightforward and can be found in  \cite{FKLS}.

\begin{observation}\label{simple-lower}
The expected running time of any algorithm is $\Omega(D+D^2/k)$, even if the number of agents~$k$ is known to all agents.
\end{observation}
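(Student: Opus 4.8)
The plan is to establish the two lower bounds $\Omega(D)$ and $\Omega(D^2/k)$ separately, and then combine them via $D + D^2/k \le 2\max\{D,\, D^2/k\}$. The term $\Omega(D)$ is immediate and needs no probabilistic reasoning: finding the treasure means that some agent occupies the target node $t$, and since every agent starts at $s$ and traverses one edge per time unit, no agent can reach a node at distance $D=d_G(s,t)$ before time $D$. Hence the treasure is never found before time $D$ in \emph{any} execution, so the expected running time is at least $D$ for every placement of the treasure, irrespective of any knowledge the agents may hold (including $k$).

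The substance of the claim is the bound $\Omega(D^2/k)$, which I would prove by an adversarial volume argument. The single elementary fact to record first is a deterministic covering bound: in any execution a single agent visits at most $T+1$ distinct nodes during the first $T$ time steps (at most one new node per traversal, plus the start), so the $k$ agents jointly visit at most $k(T+1)$ distinct nodes by time $T$, in every execution and hence also in expectation. Crucially, since the agents receive no information about the target until they actually step on it, the (random) set $\mathcal{V}_T$ of nodes visited by time $T$ is a function of the agents' coins and advice \emph{only}, and is statistically independent of where the adversary places the treasure; the finding time is simply $\min\{\tau : t\in\mathcal{V}_\tau\}$. This independence is exactly what licenses an averaging argument over candidate targets.

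Next, fix the regime $D \ge \sqrt{8k}$ (the complementary regime $D<\sqrt{8k}$ gives $D^2/k=O(1)$, so there $\Omega(D+D^2/k)$ already follows from the $\Omega(D)$ term). Set $T=\lfloor D^2/(8k)\rfloor$ and consider the annulus $A$ of grid nodes at distance in $(D/2,D]$, which contains $\Theta(D^2)$ nodes. By the covering bound, $\mathbb{E}[\,|\mathcal{V}_T \cap A|\,]\le k(T+1)\le \tfrac14 D^2$, at most a constant fraction of $|A|$; therefore the expected number of nodes of $A$ left unvisited by time $T$ is a constant fraction of $|A|$. By averaging, there exists a node $v^\star\in A$ that is unvisited by time $T$ with probability at least $1/2$. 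Placing the treasure at $t=v^\star$ then forces $\mathbb{E}[\text{running time}]\ge T\cdot\Pr[\,v^\star\notin \mathcal{V}_T\,]\ge T/2=\Omega(D^2/k)$, while $d_G(s,v^\star)=\Theta(D)$, so this is $\Omega(d_G(s,v^\star)^2/k)$ expressed in terms of the true distance, as required.

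The point I would flag as the main conceptual obstacle — and the reason the statement reads ``even if $k$ is known'' — is the choice of candidate set. The naive attempt of averaging over the \emph{sphere} of nodes at distance exactly $D$ fails, since that sphere has only $\Theta(D)$ nodes and the covering bound would yield a find-probability of order $kT/D$, giving merely $\Omega(D/k)$. The quadratic term genuinely comes from the two-dimensionality of the uncertainty: the adversary exploits that the agents know neither the distance nor the direction to $t$ (not their ignorance of $k$), forcing them to cover a region of area $\Theta(D^2)$, and it is the $\Theta(D^2)$ size of the annulus, rather than the $\Theta(D)$ sphere, that drives the bound. The only care needed is to keep $v^\star$ at distance $\Theta(D)$ so the conclusion is phrased in the true distance to the treasure; using an annulus instead of the full ball $B_D$ achieves this cleanly.
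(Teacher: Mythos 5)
Your proof is correct and takes essentially the same route as the paper, which settles this observation by exactly such ``volume considerations'' (deferring the details to \cite{FKLS}, and reusing the same counting idea in the proof of Theorem~\ref{main-lower}): the $k$ agents jointly visit at most $k(T+1)$ nodes by time $T$, while the region at distance $\Theta(D)$ contains $\Theta(D^2)$ nodes, so averaging yields a target that is unvisited by time $\Omega(D^2/k)$ with constant probability, and the $\Omega(D)$ term is immediate from unit-speed motion. Your additional care --- using an annulus so that $v^\star$ lies at distance $\Theta(D)$, and noting that the bound must be read adversarially over placements (since an algorithm tuned to one exact sphere could cover it in time $O(D)$) --- matches the paper's framing and introduces no gap.
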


We evaluate the time performance of an algorithm  with respect to the lower bound given by Observation~\ref{simple-lower}.
Formally,  let $\Phi(k)$ be a function of $k$. A search  algorithm $\cA:=\langle \cP,\cO\rangle$ is called {\em $\Phi(k)$-competitive}~if
$$\tau_{\cA} (D,k)\leq \Phi(k)\cdot (D+D^2/k),$$ for every integers $k$ and $D$.
Our goal is establish connections between the {\em size} of the advice, namely $\Psi(k)$, and the competitiveness  $\Phi(k)$ of the search algorithm. \\

\noindent{\bf More definitions:}
The {\em distance} between two nodes  $u,v\in G$, denoted $d(u,v)$, is simply the Manhattan  distance between them, i.e., the number of edges on the shortest path connecting $u$ and $v$ in the grid~$G$.  For a node $u$, let $d(u):=d(u,s)$ denote the distance between $u$ and the source node.
Hence, $D=d(t)$.

\section{Lower Bounds on the Advice}\label{sec:lower}
The  theorem below generalizes Theorem 4.1 in \cite{FKLS}, taking into account the notion of advice. 
All our lower bound results follow as corollaries of this theorem.
Note that for the theorem to be meaningful we are interested in advice size  whose order of magnitude is less than $\log\log k$. Indeed, if $\Psi(k)=\log\log k$, then one can encode a 2-approximation of $k$ in each advice, and obtain an optimal result, that is, an $O(1)$-competitive algorithm (see \cite{FKLS}).

Before stating the theorem, we need the following definition. A non-decreasing function $\Phi(x)$ is called {\em relatively-slow} if $\Phi(x)$  is sublinear (i.e., $\Phi(x)=o(x)$) and if
there exist two positive constants $c_1$ and $c_2<2$ such that when
restricted to $x>c_1$, we have 
$\Phi(2x)<c_2\cdot \Phi(x)$. Note that this definition captures many natural sublinear functions\footnote{For example, note that the functions of the form $\alpha_0+\alpha_1\log^{\beta_1} x+\alpha_2\log^{\beta_2}\log x+ \alpha_3 2^{\log^{\beta_3 }\log x} \log x+\alpha_4\log^{\beta_4} x\log^{\beta_5}\log x$, (for
non-negative constants $\alpha_i$ and $\beta_i$, $i=1,2,3,4,5$ such that $\sum_{i=1}^4\alpha_i>0$)
 are all relatively-slow.}. 

\begin{theorem}\label{main-lower}
Consider  a $\Phi(k)$-competitive search algorithm using advice of size $\Psi(k)$. Assume that $\Phi(\cdot)$ is relatively-slow  and that $\Psi(\cdot)$ is non-decreasing. Then there exists some constant $x'$, such that for every $k>2^x{'}$,
the sum $\sum_{i=x'}^{\log k} \frac{1}{\Phi(2^i)\cdot 2^{\Psi(k)}}$ is at most some fixed constant.
\end{theorem}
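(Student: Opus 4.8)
The plan is to read the statement as the combination of a \emph{covering lower bound} at each dyadic scale with a \emph{per-advice budget upper bound} that is then summed over scales. For a fixed $i$ with $x'\le i\le\log k$, I would work in the balanced-population instance: run the same algorithm $\langle\cP,\cO\rangle$ with exactly $m=2^i$ agents (oracle $\cO_{2^i}$), and place the treasure at distance $D=2^i$. Then $D+D^2/m=2\cdot 2^i=\Theta(2^i)$, so $\Phi(\cdot)$-competitiveness together with the Markov reduction stated after the time-complexity definition guarantees that the worst-case target at distance $2^i$ is hit by time $T_i:=\Theta\big(\Phi(2^i)\cdot 2^i\big)$ with probability at least $1/2$. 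The point of using the $2^i$-agent instance (rather than the $k$-agent one) is precisely that it produces $\Phi(2^i)$ in the deadline and that its advice uses at most $2^{\Psi(2^i)}\le 2^{\Psi(k)}$ distinct values, since $\Psi$ is non-decreasing and $2^i\le k$; this is what lets the number of advice values, and not the number of agents, be the governing quantity. Writing $p_a(t,T)=\Pr[\cP \text{ with advice } a \text{ visits } t \text{ by time } T]$, which depends only on $\cP$ and not on the instance, the requirement ``worst-case node covered with probability $\ge 1/2$'' forces, via $1-x\le e^{-x}$ and the fact that the $\Theta(2^i)$ nodes of the distance-$2^i$ circle must \emph{all} be reached, a near-uniform coverage: in particular $\sum_{a} n^{(i)}_a\, f_a(2^i)=\Omega(2^i)$, where $f_a(2^i)=\sum_{d(t)=2^i}p_a(t,T_i)$ and $n^{(i)}_a$ is the number of agents receiving advice $a$ in this instance.

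Next I would extract an intrinsic per-advice constraint on the profile $\{f_a(2^i)\}_i$. The key observation is that $f_a(2^i)$ is at most the expected number of \emph{distinct} distance-$2^i$ nodes that a single trajectory of $\cP(a)$ visits by time $T_i$; each visited node lies at a unique distance, hence belongs to a unique scale, and to reach distance $2^i$ one must spend at least $2^i$ time units. Because $T_j\le T_i$ for $j\le i$ (here relatively-slowness of $\Phi$ guarantees $T_{i+1}/T_i\in[2,2c_2)$, so the deadlines are nested and grow by a bounded factor), the distinct nodes counted at all scales $j\le i$ are disjoint and are all visited by time $T_i$, giving the nested-deadline bound $\sum_{j=x'}^{i} f_a(2^j)\le T_i+1=O\big(\Phi(2^i)2^i\big)$. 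Combined with the trivial circle bound $f_a(2^i)=O(2^i)$, this is the budget statement that prevents a single advice value from placing heavy coverage at too many scales simultaneously.

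The assembly is then to neutralise the geometric growth of the scales. From the covering bound, dividing by $\Phi(2^i)\,2^i=\Theta(T_i)$ gives, for each $i$,
\[
\frac{1}{\Phi(2^i)}\;\le\; C\sum_{a}\frac{n^{(i)}_a\,f_a(2^i)}{T_i}.
\]
Summing over $i$ and exchanging the order of summation, I would bound the inner sum $\sum_i f_a(2^i)/T_i$ using the nested-deadline inequality (a summation-by-parts with partial sums controlled by $T_i$), so that each advice class contributes only an $O(1)$ amount; since at most $2^{\Psi(k)}$ advice classes are active and the weights $n^{(i)}_a/2^i$ form distributions, what survives on the right is $O\big(2^{\Psi(k)}\big)$. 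This yields $\sum_{i=x'}^{\log k}\frac{1}{\Phi(2^i)}=O\big(2^{\Psi(k)}\big)$, i.e. $\sum_{i=x'}^{\log k}\frac{1}{\Phi(2^i)\,2^{\Psi(k)}}=O(1)$, with $x'$ chosen past the threshold $c_1$ of the relatively-slow definition and large enough for the circle-size constants.

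The step I expect to be the main obstacle is the coupling between the covering bound and the per-advice budget bound so that the factor $2^{\Psi(k)}$, rather than the agent count, emerges. The averaged covering bound alone is far too weak (it is satisfied by ``thin'' ray-like strategies and only yields a factor of the order of the number of agents); the argument must genuinely use the \emph{worst-case} target to force near-full coverage of each circle, and must account for the fact that many i.i.d.\ agents sharing one advice cannot substitute for having distinct advice across scales. Making the summation-by-parts step lose only a constant per advice class, uniformly over the possibly very different advice distributions $n^{(i)}_\cdot$ arising at different scales, is where the relatively-slow hypothesis on $\Phi$ and the non-decreasing hypothesis on $\Psi$ must be used most carefully.
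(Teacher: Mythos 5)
There is a genuine gap, and it sits exactly where you predicted the main obstacle would be --- but it is not a fixable technicality; the instance design itself extracts too little from competitiveness. You derive constraints only from the ``diagonal'' instances $D=k=2^i$ with per-scale deadlines $T_i=\Theta(\Phi(2^i)\,2^i)$. The resulting constraint system --- your covering bound $\sum_a n_a^{(i)} f_a(2^i)=\Omega(2^i)$ together with the nested-deadline budget $\sum_{j\le i} f_a(2^j)\le T_i+1$ --- is simultaneously satisfiable with a \emph{single} advice value and $\Phi=O(1)$. Concretely, consider the ``random ray'' strategy: in phase $j$ an agent walks to a uniformly random node at distance $2^j$ and returns (repeating each phase a constant number of times), so phases $1,\dots,i$ take time $O(2^i)$ and each node at distance exactly $2^i$ is reached by a single agent with probability $\Omega(2^{-i})$; hence $2^i$ i.i.d.\ such agents cover every node of the distance-$2^i$ circle by time $O(2^i)$ with probability $\ge 1/2$, and the budget holds trivially. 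Yet the theorem (applied with constant $\Phi$ and $\Psi\equiv 0$) asserts $\sum_{i\le\log k}1/\Phi(2^i)=\Theta(\log k)=O(1)$, a contradiction --- so no assembly of your two constraints can prove it. Independently, the assembly step fails on its own terms: since $T_{j+1}\ge 2T_j$, the budget permits $f_a(2^j)=\Theta(T_j)$ for every $j$ (because $\sum_{j\le i}T_j=O(T_i)$), whence $\sum_i f_a(2^i)/T_i$ can be $\Theta(\log k)$ per advice class rather than $O(1)$; the weights $n_a^{(i)}$, which can be as large as $2^i$, only make the exchange-of-summation worse. A $\Theta(\log k)$ loss per advice class is precisely what this theorem cannot afford.

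The paper's proof avoids both problems with one move: it fixes a \emph{single} time horizon $2T$ for all scales and places the treasure out of reach, at distance $2T+1$, so that all trajectories up to time $2T$ are treasure-independent. Running the algorithm with $k_i=2^i$ agents, $\Phi$-competitiveness forces every node of the ring $R_i=B(d_i)\setminus B(d_{i-1})$, where $d_i=\sqrt{Tk_i/\Phi(k_i)}$, to be visited by time $2T$ with probability at least $1/2$; note $d_i\gg k_i$, i.e.\ the binding constraints come from the off-diagonal regime $D\gg k$, which your diagonal instances never probe. Pigeonhole over the $k_i$ agents and then over advice values yields for each $i$ an advice $a_i\in\{1,\dots,2^{\Psi(\sqrt{T})}\}$ whose single-agent expected coverage of $R_i$ by time $2T$ is $\Omega(T/\Phi(2^i))$. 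Because the rings are pairwise disjoint and all share the deadline $2T$, the per-advice budget collapses to the single inequality ``one agent visits at most $2T$ nodes by time $2T$'': summing the ring coverages over $i$, attributing them to the at most $2^{\Psi(\sqrt{T})}$ advice values, and applying this cap to the best single advice gives $T\sum_i 1/\bigl(\Phi(2^i)\,2^{\Psi(\sqrt T)}\bigr)=O(T)$, and the theorem follows after substituting $T\mapsto T^2$. If you want to salvage your write-up, the repair is structural: abandon the per-scale deadlines and redo your covering and budget bounds with this shared horizon and these rings.
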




\begin{proof}
Consider a search algorithm  with advice size $\Psi(k)$ and competitiveness $\Phi'(k)$, where $\Phi'(\cdot)$ is relatively-slow.
By definition,  the expected running time is less
than $\tau(D,k)=(D+D^2/k)\cdot\Phi'(k)$.  Note, for $k\leq D$, we have
$\tau(D,k)\leq \frac{D^2\Phi(k)}{k}$, where $\Phi(k)=2\Phi'(k)$, and $\Phi(\cdot)$ is relatively-slow. Let $c_1$ be the constant promised by the fact that  $\Phi$ is relatively-slow. Let $x_0>c_1$ be sufficiently large so that $x_0$ is  a power of 2, and  for every $x>x_0$, we have $\Phi(x)<x$ (recall, $\Phi$ is sublinear).

Fix an integer $T>x_0^2$.
In the remaining  of the proof, we assume that the treasure is placed somewhere at  distance $D:=2T+1$. Note, this means, in particular,
that by time $2T$ the treasure has not been found yet.

Fix an integer $i$ in $[\log x_0,\frac{1}{2}\log T]$,
set $$d_i=\sqrt{\frac{T\cdot k_i}{ \Phi(k_i)}},$$
and let  $B(d_i):=\{v\in G: d(v)\leq d_i\} $ denote the ball of radius $d_i$ around the source node.
We consider now the case where the algorithm is executed with $k_i:=2^i$ agents (using the corresponding advices  given by the oracle $\cO_{k_i}$). For every set of nodes $S\subseteq B(d_i)$, let $\chi_i(S)$ denote the random variable
indicating the number of nodes in $S$ that were visited by at least one of the
$k_i$ agents by time $2T$.  (For short, for a singleton node $u$, we write $\chi_i(u)$ instead of  $\chi_i(\{u\})$.) Note, the value of  $\chi_i(S)$ depends on the values of the coins tosses made by the oracle for assigning the advices as well as on the values of the coins tossed by  the $k_i$ agents.
Now, define the ring $R_i :=B(d_i)\setminus B(d_{i-1})$.

\begin{claim}\label{claim:exp}
For each integer $i\in[\log x_0,\frac{1}{2}\log T]$, we have $\mathbf{E}(\chi_i(R_i))=\Omega(d_i^2)$.
\end{claim}

To see why the claim holds, note that by the properties of $\Phi$, and from the fact that $2^i\leq \sqrt{T}$, we get  that $k_i\leq d_i$, and therefore, $\tau(d_i,k_i)\leq \frac{d_i^2\Phi(k_i)}{k_i}= T$. It follows that for each node $u\in B(d_i)$, we have $\tau(d(u),k_i)\leq T$, and hence, the
probability that $u$ is visited by time $2T$ is at least $1/2$, that is,
$\mathbf{Pr}(\chi_i(u)= 1)\geq 1/2$. Hence, $\mathbf{E}(\chi_i(u))\geq 1/2.$
Now, by linearity of expectation, $\mathbf{E}(\chi_i(R_i))=\sum_{u\in R_i} \mathbf{E}(\chi_i(u))\geq |R_i|/2.$
Consequently, by time $2T$, the expected number of nodes in $R_i$ that  are visited by the $k_i$ agents is
$\Omega(|R_i|)=\Omega\left(d_{i-1}(d_i-d_{i-1})\right)=\Omega\left(\frac{T\cdot k_i}{\Phi(k_{i-1})}\cdot \left( \sqrt{ \frac{2\Phi(k_{i-1})}{\Phi(k_{i})}} - 1 \right)\right)=\Omega \left(\frac{T\cdot k_i}{\Phi(k_i)}\right)=\Omega(d_i^2)$,
where the second equality follows from the fact that $d_i=d_{i-1}\cdot
\sqrt{\frac{2\Phi(k_{i-1})}{\Phi(k_{i})}}$, and the third equality follows from the fact that $\Phi(\cdot)$ is relatively-slow.
This establishes the claim. \\

Note that for each $i\in [\log x_0+1,\frac{1}{2}\log T]$, the advice given by the oracle
to any of the $k_i$ agents must use at most $\Psi(k_i)\leq \Psi(\sqrt{T})$ bits. In other words, for each of these $k_i$ agents, each advice is some integer whose value is at most $2^{\Psi(\sqrt{T})}$.

Let $W(j, i)$ denote the random variable indicating the number of nodes in $R_i$ visited by the $j$'th agent by time $2T$, assuming that the total number of agents is  $k_i$.
By Claim~\ref{claim:exp}, for every integer $i\in[\log x_0+1,\frac{1}{2}\log T]$, we have:
$$
\mathbf{E}\left(\sum_{j=1}^{k_i} W(j,i)\right)~\geq~\mathbf{E}(\chi_i(R_i))~=~ \Omega(d_i^2).
$$
By linearity of expectation, it follows that for every  integer $i\in[\log x_0+1,\frac{1}{2}\log T]$, there exists an integer $j\in \{1,2,\cdots, k_i\}$   for which
$$
\mathbf{E}(W(j,i))=\Omega(d_i^2/k_i)=\Omega(T/\Phi(k_i)).
$$


Now, for each advice in the relevant range, that is, for each $a\in\{1,\cdots, 2^{\Psi(\sqrt{T})}\}$, let $M(a,i)$ denote the random variable indicating the number of nodes in $R_i$ that an agent with advice $a$ visits by time $2T$. Note, the value of $M(a,i)$ depends only  on the values of the coin tosses made by the agent.
On the other hand,
note that the value of $W(j,i)$ depends on the results of the coin tosses made by the oracle assigning the advice, and the results of the coin tosses made by the agent that uses the assigned advice.
Recall,
the oracle may assign an advice to agent $j$ according to a distribution that is different than the distributions used for other agents. However, regardless of the distribution used by the oracle for agent $j$, it must be the case that there exists an advice $a_i
 \in\{1,\cdots, 2^{\Psi(\sqrt{T})}\}$, for which $
\mathbf{E}(M(a_i
,i))\geq \mathbf{E}(W(j,i))$. Hence, we obtain:
$$
\mathbf{E}(M(a_i
,i))=\Omega(T/\Phi(k_i)).
$$


Let $A=\{a_i
\mid  i\in[\log x_0+1,\frac{1}{2}\log T]\}$.
Consider now an ``imaginary'' scenario\footnote{The scenario is called imaginary, because,  instead of letting the oracle assign the advice
for the agents, we impose a particular advice to each agent, and let the agents perform the search with our advices. Note, even though such a scenario cannot occur by the definition of the model, each individual agent with advice $a$ cannot distinguish this case from the case that the number of agents was some $k'$ and the oracle assigned it the advice $a$.} in which  we execute the search algorithm with $|A|$ agents, each having a different advice in~$A$. That is, for each advice $a\in A$, we have a different agent executing the algorithm using advice $a$.
 For every set $S$ of nodes, let $\hat{\chi}(S)$ denote the random variable
indicating the number of nodes in $S$ that were visited by at least one of these
$|A|$ agents (in the ``imaginary'' scenario) by time $2T$. Let $\hat{\chi}:=\hat{\chi}(G)$ denote the random variable
indicating the total number of nodes that were visited by at least one of these
agents by time $2T$.

By definition, for each $i\in[\log x_0+1,\frac{1}{2}\log T]$, the expected number of nodes in $R_i$ visited by at least one of these $|A|$ agents is
$$\mathbf{E}(\hat{\chi}(R_i))\geq \mathbf{E}(M(a_i
,i))=\Omega(T/\Phi(k_i)).$$
Since the sets $R_i$ are pairwise disjoint, the linearity of expectation
implies that the expected number of nodes covered by these agents by time $2T$ is
$$
\mathbf{E}(\hat{\chi})~\geq~\sum_{i=x_0+1}^{\frac{1}{2}\log
T}  \mathbf{E}(\hat{\chi}(R_i))   ~=~  \Omega\left(\sum_{i=x_0+1}^{\frac{1}{2}\log
T}\frac{T}{\Phi(k_i)} \right) $$  $$=T\cdot \Omega\left(\sum_{i=x_0+1}^{\frac{1}{2}\log
T}\frac{1}{\Phi(2^i)}\right).
$$
Recall that $A$ is included in
$\{1,\cdots, 2^{\Psi(\sqrt{T})}\}$. Hence, once more by linearity of expectation,    there must exist an advice $\hat{a}\in A$, such that the expected number of nodes that an agent with advice $\hat{a}$ visits by time $2T$ is
$$
T\cdot \Omega\left(\sum_{i=x_0+1}^{\frac{1}{2}\log
T}\frac{1}{\Phi(2^i)\cdot 2^{\Psi(\sqrt{T})}} \right).
$$
Since each agent may visit at most one node in one unit of time, it follows that,  for every $T$ large enough, the sum $$\sum_{i=x_0+1}^{\frac{1}{2}\log T} 1/\Phi(2^i)\cdot 2^{\Psi(\sqrt{T})}$$ is at most some fixed constant.  The proof of the theorem now follows
by replacing the variable $T$ with $T^2$.
\end{proof}
\begin{corollary}
Consider  a $\Phi(k)$-competitive search algorithm using advice of size $\Psi(k)$. Assume that $\Phi(\cdot)$ is relatively-slow.
Then, $\Phi(k)=\Omega(\log k / 2^{\Psi(k)})$, or in other words,  ${\Psi(k)}=\log\log k - \log \Phi(k) - O(1)$.

\end{corollary}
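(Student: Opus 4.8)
The plan is to derive this corollary directly from Theorem~\ref{main-lower} by bounding the summands from below using the monotonicity of $\Phi$. Theorem~\ref{main-lower} supplies a fixed constant, call it $C$, and a constant $x'$ such that for every $k>2^{x'}$ we have
$$\sum_{i=x'}^{\log k} \frac{1}{\Phi(2^i)\cdot 2^{\Psi(k)}} \le C.$$
The key observation is that the factor $2^{\Psi(k)}$ is independent of the summation index $i$ and can be pulled out of the sum, while each remaining factor $1/\Phi(2^i)$ can be replaced by the smallest such factor over the range of $i$.

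First I would invoke the hypothesis that $\Phi$ is relatively-slow, and hence non-decreasing by definition. Since every index $i$ in the summation range satisfies $2^i\le k$, monotonicity gives $\Phi(2^i)\le\Phi(k)$, so that $1/\Phi(2^i)\ge 1/\Phi(k)$ for each term. Substituting this lower bound into the sum yields
$$C \;\ge\; \sum_{i=x'}^{\log k} \frac{1}{\Phi(2^i)\cdot 2^{\Psi(k)}} \;\ge\; \frac{\log k - x' + 1}{\Phi(k)\cdot 2^{\Psi(k)}},$$
where the last step simply counts the $\log k - x' + 1$ identical lower bounds. Rearranging immediately gives $\Phi(k)\cdot 2^{\Psi(k)} \ge (\log k - x' + 1)/C$.

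Next I would observe that because $x'$ is a fixed constant independent of $k$, for all sufficiently large $k$ the numerator satisfies $\log k - x' + 1 = \Omega(\log k)$. Therefore $\Phi(k)\cdot 2^{\Psi(k)}=\Omega(\log k)$, which is precisely the statement $\Phi(k)=\Omega(\log k/2^{\Psi(k)})$. Taking logarithms of both sides of this last inequality and rearranging then produces the equivalent form $\Psi(k)=\log\log k - \log\Phi(k) - O(1)$, completing the argument.

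The manipulation here is entirely routine; the only points that require care are the use of monotonicity to replace each $\Phi(2^i)$ by its largest value $\Phi(k)$ over the summation range, and the fact that the number of summands is $\Theta(\log k)$ precisely because $x'$ is a constant. Since the substantive work — establishing a bounded sum containing $\Theta(\log k)$ terms — was already carried out in Theorem~\ref{main-lower}, I do not anticipate any genuine obstacle beyond this elementary step.
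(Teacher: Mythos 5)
Your proposal is correct and follows essentially the same route as the paper's own proof: invoke Theorem~\ref{main-lower}, use the monotonicity of $\Phi$ to replace each term $1/\Phi(2^i)$ by $1/\Phi(k)$, and conclude that $\log k/\bigl(\Phi(k)\cdot 2^{\Psi(k)}\bigr)=O(1)$. Your version is slightly more careful about the constant lower limit $x'$ of the summation, but this is a cosmetic difference, not a different argument.
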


\begin{proof}
Theorem~\ref{main-lower} says that for every $k$, we have $$\frac{1}{2^{\Psi(k)}}\sum_{i=1}^{\log k} \frac{1}{\Phi(2^i)}=O(1).$$ On the other hand, since $\Phi$ is non-decreasing, we have
$$\sum_{i=1}^{\log k} \frac{1}{\Phi(2^i)}\geq \frac{\log k}{\Phi(k)}.$$
Hence,
$$ \frac{\log k}{2^{\Psi(k)}\cdot\Phi(k)}=O(1).$$
The corollary follows.
\end{proof}

The following corollary follows directly from the previous one.

\begin{corollary}
Let $\epsilon<1$ be a positive constant. Consider  a $\frac{\log k}{2^{\log^\epsilon\log k}}$-competitive search algorithm using advice of size $\Psi(k)$. Then
 $\Psi(k)= \log^{\epsilon}\log k-O(1)$.
\end{corollary}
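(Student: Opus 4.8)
The plan is to obtain this corollary as an immediate specialization of the preceding corollary, which already asserts that for any relatively-slow competitiveness function $\Phi(\cdot)$ one has $\Psi(k)=\log\log k-\log\Phi(k)-O(1)$. Consequently the whole task reduces to two routine steps: first, checking that the specific choice $\Phi(k)=\frac{\log k}{2^{\log^\epsilon\log k}}$ satisfies the hypotheses of that corollary; and second, performing the one-line substitution of $\log\Phi(k)$ into its conclusion.

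For the first step I would verify the three defining properties of a relatively-slow function (using throughout the convention $\log^\epsilon(\cdot)=(\log(\cdot))^\epsilon$). Sublinearity is immediate, since $2^{\log^\epsilon\log k}\ge 1$ gives $\Phi(k)\le\log k=o(k)$. Writing $y:=\log k$, we have $\log\Phi(k)=\log y-\log^\epsilon y=z-z^\epsilon$ with $z:=\log y$; as $\epsilon<1$ the leading term dominates, so this is increasing once $z$ is large, whence $\Phi$ is non-decreasing for all sufficiently large $k$. For the doubling bound, replacing $k$ by $2k$ sends $y$ to $y+1$, and
\[
\frac{\Phi(2k)}{\Phi(k)} = \frac{y+1}{y}\cdot 2^{\,\log^\epsilon y-\log^\epsilon(y+1)} \longrightarrow 1 \qquad (k\to\infty),
\]
because the first factor tends to $1$ and the exponent is negative and tends to $0$. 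Hence for a suitable threshold $c_1$ the ratio stays below some constant $c_2<2$, exactly as the definition of relatively-slow demands.

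For the second step the only genuine computation is $\log\Phi(k)$. Using $\log\bigl(2^{\log^\epsilon\log k}\bigr)=\log^\epsilon\log k$, I obtain
\[
\log\Phi(k)=\log\log k-\log^\epsilon\log k .
\]
Substituting this into $\Psi(k)=\log\log k-\log\Phi(k)-O(1)$ from the previous corollary, the two $\log\log k$ terms cancel and what remains is $\Psi(k)=\log^\epsilon\log k-O(1)$, which is precisely the asserted bound.

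Since the quantitative content is already supplied by the previous corollary (and ultimately by Theorem~\ref{main-lower}), there is no substantial obstacle here. The only point requiring a little care is the verification that $\Phi$ is relatively-slow, and in particular stating the doubling inequality and the monotonicity only for $k$ beyond a threshold rather than for every $k$; everything else is formal substitution.
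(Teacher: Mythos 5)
Your proposal is correct and takes essentially the same route as the paper, which derives this statement directly from the preceding corollary (the specialization $\Psi(k)=\log\log k-\log\Phi(k)-O(1)$ of Theorem~\ref{main-lower}) without further argument. Your verification that $\Phi(k)=\log k/2^{\log^\epsilon\log k}$ is relatively-slow and the computation $\log\Phi(k)=\log\log k-\log^\epsilon\log k$ simply make explicit the routine details the paper leaves implicit.
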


Our next corollary implies that even though $O(\log\log k)$ bits of advice are sufficient for obtaining $O(1)$-competitiveness, roughly this
 amount of advice  is necessary even for achieving  relatively large competitiveness.

\begin{corollary}\label{cor:main}
There is no $O(\log^{1-\epsilon} k)$-competitive search algorithm  for some positive constant $\epsilon$, using advice of size $\Psi(k)= \epsilon\log\log k -\omega(1)$.
\end{corollary}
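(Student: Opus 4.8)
The plan is to obtain this as an immediate consequence of the first corollary above, which states that any $\Phi(k)$-competitive algorithm with $\Phi$ relatively-slow obeys $\Psi(k)=\log\log k-\log\Phi(k)-O(1)$. The only preliminary step is to confirm that the relevant competitiveness function fits the hypotheses. Here the governing function is $\Phi(k)=\log^{1-\epsilon}k$, which is of the form $\alpha_1\log^{\beta_1}x$ with $\beta_1=1-\epsilon\in(0,1)$, and is therefore relatively-slow by the family of examples listed after the definition. If one wishes to check this by hand rather than cite the footnote, one computes $\Phi(2x)/\Phi(x)=\bigl((1+\log x)/\log x\bigr)^{1-\epsilon}\to 1$ as $x\to\infty$, so for all large $x$ this ratio lies below any fixed $c_2\in(1,2)$, giving a valid $c_1,c_2$ pair.

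Next I would argue by contradiction. Suppose some search algorithm were $O(\log^{1-\epsilon}k)$-competitive, say $\Phi(k)\le c\log^{1-\epsilon}k$ for a constant $c$, while using advice of size $\Psi(k)=\epsilon\log\log k-\omega(1)$. Since $\Phi$ is relatively-slow, the first corollary applies and yields the lower bound $\Psi(k)\ge\log\log k-\log\Phi(k)-O(1)$.

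The remaining computation is a single substitution. Using $\log\Phi(k)\le\log c+(1-\epsilon)\log\log k=(1-\epsilon)\log\log k+O(1)$, the bound becomes
$$\Psi(k)\ge\log\log k-(1-\epsilon)\log\log k-O(1)=\epsilon\log\log k-O(1).$$
This contradicts the assumed advice size $\Psi(k)=\epsilon\log\log k-\omega(1)$: the corollary pins $\Psi(k)$ to within an additive constant of $\epsilon\log\log k$ from below, whereas the hypothesized $\omega(1)$ deficit would push it below $\epsilon\log\log k$ by an unbounded amount for large $k$.

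I expect no genuine obstacle, since the entire analytic content lives in Theorem~\ref{main-lower} and its first corollary. The only place needing a moment's care is the constant-bookkeeping: one must check that the multiplicative constant $c$ hidden in the $O(\cdot)$-competitiveness contributes only a $\log c=O(1)$ term that is absorbed into the additive $O(1)$, and one must note that it is precisely the $\omega(1)$ (rather than $O(1)$) deficit in the target statement that produces the contradiction against a bound tight only up to an additive constant.
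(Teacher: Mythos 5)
Your proof is correct, and it reaches the paper's conclusion by a slightly different route. The paper's own proof of Corollary~\ref{cor:main} goes back to Theorem~\ref{main-lower} directly: it substitutes $\Phi(2^i)=O(i^{1-\epsilon})$ into the sum and evaluates $\sum_{i=1}^{\log k}1/\Phi(2^i)=\Omega(\log^{\epsilon}k)$ as a truncated power-law sum, then concludes that $\log^{\epsilon}k/2^{\Psi(k)}$ must remain bounded, ruling out $\Psi(k)=\epsilon\log\log k-\omega(1)$. You instead black-box the first (unnamed) corollary following Theorem~\ref{main-lower}, namely $\Psi(k)\geq\log\log k-\log\Phi(k)-O(1)$, which the paper derived from the cruder estimate $\sum_{i=1}^{\log k}1/\Phi(2^i)\geq\log k/\Phi(k)$ (number of terms times the smallest term). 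For $\Phi(k)=c\log^{1-\epsilon}k$ the crude estimate and the exact evaluation agree up to a constant factor (the sum $\sum_{i\leq\log k}i^{-(1-\epsilon)}$ is $\Theta(\log^{\epsilon}k)$, dominated by its largest terms), so nothing is lost and your substitution $\log\Phi(k)=(1-\epsilon)\log\log k+O(1)$ yields exactly the paper's bound $\Psi(k)\geq\epsilon\log\log k-O(1)$. What each approach buys: yours is shorter since the generic corollary is already proven, but it is strictly weaker as a technique --- for the neighboring Corollary~\ref{cor:logk}, where $\Phi(k)=O(\log k)$, the crude bound $\log k/\Phi(k)=\Omega(1)$ gives nothing and one genuinely needs the harmonic sum $\sum_{i\leq\log k}1/i=\Omega(\log\log k)$, which is presumably why the paper evaluates the sum directly in both corollaries. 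Your verification of the relatively-slow hypothesis and the $O(1)$-versus-$\omega(1)$ bookkeeping are both handled correctly; the one presentational point to tighten is that the corollary should be applied to the bounding function $c\log^{1-\epsilon}k$ itself (any $O(\log^{1-\epsilon}k)$-competitive algorithm is in particular $c\log^{1-\epsilon}k$-competitive), since the algorithm's tightest competitiveness function need not itself be relatively-slow --- a harmless conflation that the paper's own proof also makes.
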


\begin{proof}
Assume that the competitiveness is $$\Phi(k)=O(\log^{1-\epsilon} k).$$ Then,  $$\sum_{i=1}^{\log k} \frac{1}{\Phi(2^i)\cdot 2^{\Psi(k)}}=\Omega\left(\frac{\log^{\epsilon} k}{2^{\Psi(k)}}\right).$$
According to Theorem~\ref{main-lower}, this sum must converge,  and hence, we cannot have $\Psi(k)=\epsilon\log\log k-\omega(1)$.
\end{proof}
\begin{corollary}\label{cor:logk}
There is no $O(\log k)$-competitive search algorithm, using advice of size $\log\log\log k-\omega(1)$.
\end{corollary}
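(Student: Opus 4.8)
The plan is to invoke Theorem~\ref{main-lower} directly rather than its corollaries, since for $\Phi(k)=O(\log k)$ the crude estimate $\sum_i 1/\Phi(2^i)\geq \log k/\Phi(k)$ employed in the first corollary degenerates: it would only yield $\Psi(k)=-O(1)$, which is vacuous. The extra $\log\log\log k$ term here comes precisely from treating the sum $\sum_i 1/\Phi(2^i)$ with its sharper, harmonic-series asymptotics, so I would return to the statement of Theorem~\ref{main-lower} itself.

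First I would verify that $\Phi(x)=\log x$, and hence any $O(\log k)$ competitiveness, qualifies as relatively-slow so that Theorem~\ref{main-lower} applies. The function is non-decreasing and sublinear, and $\Phi(2x)=1+\log x<c_2\log x$ for any fixed $c_2$ with $1<c_2<2$ as soon as $\log x>1/(c_2-1)$; this fixes the constants $c_1,c_2$ demanded by the definition. I expect this verification to be entirely routine.

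Next, arguing by contradiction, I would assume an $O(\log k)$-competitive algorithm whose advice size satisfies $\Psi(k)=\log\log\log k-\omega(1)$, and then estimate the sum appearing in Theorem~\ref{main-lower}. Since $\Phi(2^i)=O(i)$, we have $1/\Phi(2^i)=\Omega(1/i)$, so
$$
\sum_{i=x'}^{\log k}\frac{1}{\Phi(2^i)}=\Omega\left(\sum_{i=x'}^{\log k}\frac{1}{i}\right)=\Omega(\log\log k),
$$
using the standard fact that the partial harmonic sum up to $n$ is $\Theta(\log n)$, here with $n=\log k$.

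Finally, Theorem~\ref{main-lower} asserts that the normalized quantity $\frac{1}{2^{\Psi(k)}}\sum_{i=x'}^{\log k}1/\Phi(2^i)$ is bounded by a fixed constant for all large $k$. Combining this with the lower bound just obtained gives $2^{\Psi(k)}=\Omega(\log\log k)$, which forces $\Psi(k)\geq\log\log\log k-O(1)$. But the standing hypothesis $\Psi(k)=\log\log\log k-\omega(1)$ yields $2^{\Psi(k)}=\log\log k/g(k)$ for some $g(k)\to\infty$, so the normalized sum is $\Omega(g(k))\to\infty$, contradicting the boundedness guaranteed by the theorem. The only delicate point is the bookkeeping that converts $\Psi(k)=\log\log\log k-\omega(1)$ into a divergent quantity; beyond that, and beyond confirming the relatively-slow hypothesis for $\log k$, I anticipate no real obstacle, since the argument is essentially a substitution of $\Phi(k)=O(\log k)$ into Theorem~\ref{main-lower} together with the harmonic-sum evaluation.
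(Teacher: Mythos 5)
Your proposal is correct and takes essentially the same approach as the paper's own proof: both substitute $\Phi(k)=O(\log k)$ into Theorem~\ref{main-lower}, evaluate the resulting sum via the harmonic series as $\Omega(\log\log k)$, and conclude that boundedness of $\Omega(\log\log k/2^{\Psi(k)})$ rules out $\Psi(k)=\log\log\log k-\omega(1)$. Your two additions --- explicitly checking that $\log x$ is relatively-slow and the $g(k)\to\infty$ bookkeeping at the end --- are details the paper leaves implicit, and are fine.
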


\begin{proof}
Assume that the competitiveness is  $\Phi(k)=O(\log k)$. Since $\Phi(2^i)=O(i)$, we have
$$\sum_{i=1}^{\log k} 1/\Phi(2^i)=  \sum_{i=1}^{\log k} 1/ i= \Omega(\log\log k).$$
According to Theorem~\ref{main-lower}, $$\frac{1}{2^{\Psi(k)}}\sum_{i=1}^{\log k} 1/\Phi(2^i)=\Omega(\log\log k/2^{\Psi(k)})$$ must converge as $k$ goes to infinity.
 In particular, we cannot have $\Psi(k)=\log\log\log k-\omega(1)$.
\end{proof}

\section{Upper Bound}\label{sec:upper-bound}
The lower bound on the advice size given in Corollary~\ref{cor:main} is tight, as $O(\log\log k)$ bits of advice are sufficient to obtain an $O(1)$-competitive search algorithm.
 To further illustrate the power of Theorem~\ref{main-lower}, we now
show that the lower bound  mentioned in Corollary~\ref{cor:logk} is also tight. Theorem \ref{thm:logloglog-lower} follows by combining the theorem below and
Corollary~\ref{cor:logk}.
\begin{theorem}\label{th:lowerbound}
There exists  an $O(\log k)$-competitive algorithm using $\log\log\log k+O(1)$ bits of advice.
\end{theorem}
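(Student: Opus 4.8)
The plan is to build directly on the $O(1)$-competitive, \emph{known-}$k$ algorithm of \cite{FKLS}, which I will use as a black box: when at least $\hat k$ agents each run it with a common parameter $\hat k$, then for every distance $d$ they collectively visit every node of the ball $B(d)$ by time $O(d+d^2/\hat k)$, each node with at least some constant probability (the per-agent routine simply covers, for geometrically growing $d=2^j$, a freshly chosen random $\Theta(d^2/\hat k)$-size region of $B(d)$). The only new ingredient is how to feed this routine a good parameter using a minuscule amount of advice.

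The advice is the single integer $m:=\lfloor \log\log k\rfloor$. Since $m\le \log\log k$, it is encodable in $\lceil\log(m+1)\rceil=\log\log\log k+O(1)$ bits, and it is the same for all agents (so the oracle is deterministic and assigns identical advice, as in \cite{FKLS}). From $m$ every agent recovers the window $\ell_{lo}:=2^m\le \log_2 k<2^{m+1}=:\ell_{hi}$ for $\log_2 k$, together with its width $N:=\ell_{hi}-\ell_{lo}+1=2^m+1=\Theta(\log k)$. Note the agent learns $k$ only up to squaring; crucially it cannot afford to be told \emph{where} in this window the true value lies, as that would cost $\log N=\Theta(\log\log k)$ bits. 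The protocol compensates by randomizing over the window: each agent independently draws a scale $s$ uniformly from $\{\ell_{lo},\dots,\ell_{hi}\}$, sets $g:=2^s$ as its guess for $k$, and runs the known-$k$ routine with the \emph{deflated} parameter $\hat k:=g/(4N)$ (deflated because only a $1/N$ fraction of the agents will share any given scale).

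The analysis then isolates the agents that guess correctly. The target scale $s^\star:=\lceil\log_2 k\rceil$ lies in $\{\ell_{lo},\dots,\ell_{hi}\}$, and the number of agents drawing $s^\star$ is $\mathrm{Binomial}(k,1/N)$ with mean $k/N\ge 2\hat k(s^\star)$ (since $g=2^{s^\star}<2k$); a Chernoff bound gives that with probability at least $1/2$ there are at least $\hat k(s^\star)$ such agents. These agents all run the known-$k$ routine with the common, matching parameter $\hat k(s^\star)=\Theta(k/N)$, so by the black-box guarantee they cover $B(D)$---and in particular the target $t$---by time $O(D+D^2/\hat k(s^\star))=O(N\,(D+D^2/k))=O(\log k\,(D+D^2/k))$, since the geometric growth of $d$ makes the final distance phase dominate. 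This yields the claimed $O(\log k)$ competitiveness together with the advice bound $\log\log\log k+O(1)$.

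The step I expect to require the most care is turning the \emph{constant} per-scale success probability into a bound on the \emph{expected} running time, rather than merely a ``found by time $2\tau$ with probability $1/2$'' guarantee: a single correctly-scaled agent covers only a random fraction of $B(D)$, so the event that $t$ is missed must be retried with fresh randomness. The standard remedy is to have each agent repeat the whole geometric schedule (re-drawing both its scale $s$ and its random regions) in independent rounds whose lengths again grow geometrically; independence across rounds makes the miss probability decay geometrically, so the expected time stays within a constant factor of the last relevant round, i.e.\ $O(\log k\,(D+D^2/k))$. The contribution of the low-probability event that too few agents picked $s^\star$ is negligible, as its probability is $e^{-\Omega(k/\log k)}$, which dwarfs any polynomial blow-up in the completion time. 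These are exactly the amplification bookkeeping steps already handled in \cite{FKLS}, so I would invoke that machinery rather than redo it.
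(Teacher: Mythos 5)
Your high-level construction is the same as the paper's: the advice is $\lfloor\log\log k\rfloor$, each agent guesses uniformly among the $\Theta(\log k)$ powers of two consistent with that advice, a Chernoff bound shows that with overwhelming probability $\Omega(k/\log k)$ agents guess the scale matching $k$, and those agents then run an (essentially) optimal search with effective population $k/\log k$, losing exactly the $O(\log k)$ factor. The difference is that the paper writes out the search routine explicitly (a double loop of spiral searches and random jumps), while you invoke the known-$k$ algorithm of \cite{FKLS} as a black box. That substitution is where two genuine gaps appear.

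First, your conversion from constant per-round success probability to a bound on the \emph{expected} running time does not work as stated. You claim that independent rounds of geometrically growing length make ``the miss probability decay geometrically, so the expected time stays within a constant factor of the last relevant round.'' But the expected time is a sum of terms (round length) $\times$ (probability all previous relevant rounds failed); with round lengths growing by a factor $\ge 2$ per round and per-round miss probability $1-p$ for an \emph{unspecified} constant $p$ (your black box only promises ``some constant probability''), this geometric series has ratio $\ge 2(1-p)$ and diverges whenever $p\le 1/2$. So geometric decay alone is not enough: you need either to amplify the per-round success probability above a threshold tied to the growth rate, or to use the paper's device, namely the stage/phase double loop, under which by time $\Theta(2^{s+\ell})$ every relevant agent has completed $\Omega(\ell^2)$ independent retries at scales $\ge \log D$, making the failure probability $\gamma^{-\Omega(\ell^2)}$, which beats the exponential time growth for \emph{any} constant $\gamma>1$. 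Second, your treatment of the bad event (too few agents drawing $s^\star$) asserts that its contribution is negligible because the probability $e^{-\Omega(k/\log k)}$ ``dwarfs any polynomial blow-up in the completion time,'' but nothing in your stated black box bounds the conditional expected time under that event at all --- a priori it could even be infinite, by the same divergence issue as above. The paper closes exactly this hole by modifying the \cite{FKLS} routine: in every phase each agent first performs a spiral search around the source of duration $t_i+2^i$, which guarantees that the treasure is found by time $O(D^2)$ with probability $1$, so the bad event contributes only $O(D^2 e^{-\sqrt{k}})$. Both gaps are fixable (and your instinct that the missing pieces are ``FKLS machinery'' is essentially right --- the paper's proof is that machinery plus the source spiral), but as written the expectation bound does not follow.
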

\begin{proof}
For each integer $i$, let $B_i:= \{u\,:\, d(u) \leq 2^i\}$.  Without loss of generality, we may assume that  $k$ is sufficiently large, specifically, $k\geq 4$.
Given $k$ agents, the oracle simply encodes the advice $O_k=\lfloor\log\log k\rfloor$ at each agent.
Note that since $k\geq 4$, we have $O_k\geq 1$. Observe also that the advice $O_k$ can be encoded using $\log\log\log k+O(1)$ bits.

We now explain the search protocol used by an agent having an advice $\alpha$. Let $K(\alpha)$ be the set of integers $k$ such that
$k$ is a power of 2 and $O
_k=\alpha$.
Let $g(\alpha)$ be the number of elements in $K(\alpha)$, i.e., $g(\alpha)=|K(\alpha)|$.
We enumerate the elements in $K(\alpha)$ from small to large, namely, $$K(\alpha)=\{ k_{1}(\alpha),k_{2}(\alpha),\cdots, k_{g(\alpha)}(\alpha)  \},$$ where
   $k_{\rho}(\alpha)$ is the $\rho$'s smallest integer in $K(\alpha)$, for $\rho\in\{1,2, \cdots, g(\alpha)\}$.
%
Consider Algorithm $\cA$ described below.

\renewcommand{\baselinestretch}{1.1}
\begin{algorithm}
\Begin{
Let $\alpha$ be the advice given to the agent. The agent first chooses an integer $\rho\in\{1,2, \cdots, g(\alpha)\}$ uniformly at random, and then performs the following double loop\;
\For(the \emph{stage} $j$ ){$j$ from $1$ to $\infty$}{%
  \For(the \emph{phase} $i$ ){$i$ from $1$ to $j$}{%
  \begin{smallitemize}
  \item
 Let $t_i=2^{2i+{\alpha}+2}/k_\rho(\alpha)$.
   \item
Perform a spiral search (starting\\ at   the source) for time $t_i+2^i$.
 \item
Return to the source $s$
      \item
  Go to a node $u\in B_i$ chosen\\ uniformly  at random among the\\ nodes in $B_i$
\item
Perform a spiral search for time $t_i$.
\item
Return to the source $s$.
  \end{smallitemize}
  }
 }
}
\caption{The  algorithm $\cA$.}\label{alg:k-alg}
\end{algorithm}

Informally,  apart from having an oracle encoding the advice and letting each agent ``guess'' the number of agents using the advice, the algorithm differs structurally from the previous algorithms in \cite{FKLS} by the fact that in each phase $i$,
each agent first performs a spiral search around the source before ``jumping'' to a random node in $B_i$.


Let us analyze the performances of algorithm $\cA$. Our goal is to show that $\cA$ is $O(\log k)$-competitive. We begin with the following observation.
\begin{observation}
$g(O_k)=O(\log k)$.
\end{observation}

To see why the observation holds, let $k_{\max}$ denote the maximum value such that $O_{k_{\max}}=O_k$, and let $k_{\min}$ denote the minimum value such that $O_{k_{\min}}=O_k$.  We know, $g(O_k)\leq \log k_{\max}$.
Since $O_{k_{\max}}=O_{k_{\min}}$, we have $\log\log k_{\max}<\log\log k_{\min}+1$, and hence $\log k_{\max}<2\log k_{\min}$. The observation follows, as $k_{\min}\leq k$.

Observe now that there exists $\rho^*\in\{1,2, \cdots, g(\alpha)\}$, such that  $k_{\rho^*}(\alpha)\leq k< 2k_{\rho^*}(\alpha)$. Let $N$ denote the random variable indicating the number of agents  that choose $\rho^*$.
Since each agent chooses an integer $\rho\in\{1,2, \cdots, g(\alpha)\}$ uniformly at random,
 then the expected value of $N$ is
$$\mathbf{E}(N)=k/g(\alpha)=\Omega( k/ \log k).$$
Let us now
 condition on the event that $N\geq \mathbf{E}(N)/2$. For the purposes of the proof, we consider for now only those $N$ agents.
Note, since these $N$ agents choose $\rho^*$, then they execute phase $i$ is Algorithm $\cA$ with $$t_i=\Theta(2^{2i}\cdot\frac{\log k}{k})=\Omega(2^{2i}/N).$$
Recall that $D$ denotes the distance from the treasure to the source. Let $s=\lceil\log D\rceil$. Fix a positive integer $\ell$ and consider the time $T_\ell$ until all the $N$ agents completed $\ell$
phases $i$ with $i\geq s$.
Each time an agent performs phase $i$, the agent finds the
treasure (in the second spiral search it performs)  if the chosen node $u$ belongs to
the ball $B(v, \sqrt{t_i}/2)$ around  the node $v$ holding the treasure.
Note that at least some constant fraction of the ball $B(v, \sqrt{t_i}/2)$ is contained in $B_i$.
The probability of choosing a node $u$ in that fraction is thus
$$\Omega(|B(v,\sqrt{t_i}/2)|/|B_i|)=\Omega(\log k/k),$$ which is at least $\beta/N$ for
some positive constant $\beta$.
Thus, the probability that by time $T_\ell$ none of the $N$ agents finds the treasure (while
executing their respective $\ell$ phases~$i$) is at most
$(1-\beta/N)^{N\ell}$, which is at most $\gamma^{-\ell}$ for some
constant $\gamma$ greater than $1$.

For an integer $i$, let $\psi(i)$ be the time required (for one of the $N$ agents) to execute a phase $i$.
Note that $$\psi(i)=O(2^i+2^{2i}/N).$$ Hence, the time elapsed from the beginning of the algorithm until all the $N$ agents
complete stage $j_0$ for the first time is
$$\sum_{j=1}^{j_0}\sum_{i=1}^j \psi(i)
=O\left(\sum_{j=1}^{j_0} \left( 2^j+\sum_{i=1}^j 2^{2i}/N\right)\right)=$$
$$=O(2^{j_0}+2^{2j_0}/N).$$
 It follows that for any integer $\ell$, all the $N$ agents
complete their respective stages $s+\ell$ by time
$$\hat{T}(\ell)=O(2^{s+\ell}+2^{2(s+\ell)}/N).$$ Observe that by this time, all these $N$ agents
have completed at least $\ell^2/2$ phases $i$ with $i\geq s$.
Consequently, the probability that none of the $N$ agents finds the treasure by time
$\hat{T}(\ell)$  is at most
$\gamma^{-\ell^2/2}$. Hence, conditioning on the event that $N>\mathbf{E}(N)/2$,
the expected running time is at most
$$O\left(\sum_{\ell=1}^\infty
\hat{T}(l)\cdot \gamma^{-\ell^2/2} \right)=
O\left(\sum_{\ell=1}^\infty
\frac{2^{s+\ell}}{\gamma^{\ell^2/2}}+\frac{2^{2(s+\ell)}}{N\gamma^{\ell^2/2}}\right)$$
$$=O\left(2^{s}+2^{2s}/N\right)=O(D+D^2/N)=$$
$$=O(D+D^2/k)\cdot \log k.$$
On the other hand, by Chernoff  inequality, we have:
$$
\Prob[N\leq \mathbf{E}(N)/2]<e^{-E(N)/8}<e^{-\sqrt{k}}.
$$
Since each agent performs a spiral search of size $t_i+2^i$ around the source in each stage $i$, it follows that the treasure is found by time $O(D^2)$, with probability 1. Hence,  all together, the expected running time is
$$O\left(D^2\cdot e^{-\sqrt{k}}+(D+D^2/k)\cdot \log k\right)=$$
$$=O\left((D+D^2/k)\cdot \log k\right). $$
In other words, $\Phi(k)=O(\log k)$, as desired. 

Since the advice $O_k$ can be encoded using $\log\log\log k+O(1)$ bits, and since the competitiveness is $O(\log k)$,  the theorem follows. 
\end{proof}

\section{Conclusion and Discussion}\label{sec:conclusion}
As stated above, a central place allows for a preliminary stage in which a group of searchers may assess some knowledge about its size.
Our theoretical lower bounds on  advice size may enable us to relate group search performance to the extent of information sharing within the nest. Furthermore,
our  lower bounds on the memory size (or, alternatively, on the number of states), may provide some
evidence concerning the actual memory capacity of foraging ants.

A common problem, when studying a biological system is the complexity of the system and the huge number of parameters involved.
This raises the need for more concise descriptions and several alternatives have been explored. 
One tactic is reducing the parameter space.
This is done by
dividing the parameter space into critical and non-critical directions where changes
 in non-critical parameters do not affect overall system behavior \cite{FVDGA,GWCBMS}. A different approach involves the definitions of bounds which govern a biological systems. Bounds may originate from physics: the sensitivity of eyes is limited by quantum shot noise and that of biochemical  signaling pathways by noise originating from small number fluctuations \cite{Bialek87}. Information theory has also been used to formulate such bounds, for example, by bounding information transmission rates of sensory neurons \cite{RWB}. Note that the biological systems are confined by these bounds independently of any algorithms or parameters.

Our results are an attempt to draw non-trivial bounds on biological systems from the field of distributed computing. Such bounds are particularly interesting since they provide  not a single bound but  relations between key parameters. In our case these would be the memory capacity of an agent and collective search efficiency.
  Combining our memory lower bounds and measurements of search speed with varying numbers of searchers  would provide quantitative evidence regarding the 
 number of memory bits (or, alternatively, the number of states) used in the ants' search. In particular, this would help to understand the  ants' quorum sensing process, as this number of memory bits
 are required merely for representing the output of that process.
Obviously, to truly illustrate the concept, one  must give precise (non-asymptotic) bounds and  conduct a careful experiment. This is beyond the scope of this paper. 
Nevertheless, our results provide a ``proof-of-concept'' for such a methodology.

 Central place foraging with no communication is not uncommon in social insects as they search for food around their nest. Mid-search communication of  desert ants {\em Cataglyphys} and honeybees {\em Apis mellifera}, for example, is highly limited due to their dispersedness and lack of chemical trail markings. Note that, these insects do have an opportunity to interact amongst themselves before they leave their nest  as well as some capacity to assess their own number \cite{Pratt05}. Furthermore, both desert ants and bees possess many of the individual skills required for the behavioral patterns that are utilized in our upper bounds ~\cite{SW04,SZAT00,RSR,R08,RSMG07, WRSM81,HM85,WMZ04}. Such species are therefore natural candidates for the experiments described above.

\clearpage



\end{document}